\documentclass[lettersize,journal]{IEEEtran}
\usepackage{amsthm}
\usepackage{amsmath}
\usepackage{amsmath,amsfonts}
\usepackage{algorithmic}
\usepackage{algorithm}
\usepackage{array}
\usepackage[caption=false,font=normalsize,labelfont=sf,textfont=sf]{subfig}
\usepackage{textcomp}
\usepackage{stfloats}
\usepackage{cleveref}
\usepackage{url}
\usepackage{verbatim}
\usepackage{graphicx}
\usepackage{cite}
\hyphenation{}
% updated with editorial comments 8/9/2021
\DeclareMathOperator*{\argmax}{arg\,max}
\begin{document}
\newtheorem{lemma}{Lemma}
\crefname{lemma}{Lemma}{Lemmas}
\newtheorem{theorem}{Theorem}
\crefname{theorem}{Theorem}{Theorems}

\title{RIS with Coupled Phase Shift and Amplitude: Capacity Maximization and Configuration Set Selection}

%RIS with Coupled Phase Shift and Amplitude: Capacity Maximization and Configuration Selection

%\pagenumbering{gobble}

% Capacity Maximization of RIS with Coupled Phase Shift and Amplitude
\author{Seyedkhashayar Hashemi, Masoud Ardakani, and Hai Jiang   
\thanks{The authors are with the Department of Electrical and Computer Engineering, University of Alberta, Edmonton, AB T6G 1H9, Canada (e-mail: \{seyedkha, ardakani, hai1\}@ualberta.ca).}
%\thanks{Part of this paper was submitted to IEEE ICC 2025, which presents part of the results in Section III.}
}

\maketitle

\begin{abstract}
%Reconfigurable intelligent surface (RIS) technology is one of the promising candidates for the next generation of wireless communications.
A reconfigurable intelligent surface (RIS) is a planar surface that can enhance the quality of communication by providing control over the communication environment. Reflection optimization is one of the pivotal challenges in RIS setups. While there has been lots of research regarding the reflection optimization of RIS, most works consider the independence of the phase shift and the amplitude of RIS reflection coefficients. 
In practice, the phase shift and the amplitude are coupled and according to a recent study, the relation between them can be described using a function. In our work, we consider a practical system model with coupled phase shift and amplitude. We develop an efficient method for achieving capacity maximization by finding the optimal reflection coefficients of the RIS elements. The complexity of our method is linear with the number of RIS elements and the number of discrete phase shifts. We also develop a method that optimally selects the configuration set of the system, where a configuration set means a discrete set of reflection coefficient choices that a RIS element can take.

\end{abstract}

\begin{IEEEkeywords}
Reconfigurable intelligent surfaces, reflection optimization, practical system model, coupled phase shift and amplitude.
\end{IEEEkeywords}

\section{Introduction}
\IEEEPARstart{R}{econfigurable}
intelligent surface (RIS) technology is considered as one of the key enabling components for 6G wireless communications. 
By providing control over the communication environment, the RIS technology offers numerous benefits such as energy efficiency, better coverage, higher data rate, lower cost, and more reliable communications \cite{WuZhangTWC2019}. 

A RIS is a planar surface consisting of a number of small elements \cite{ShenXuICL2019}. The surfaces are reconfigurable, meaning that once a surface is deployed, the characteristics of its elements can be adjusted using a controller \cite{TanINFOCOM2018}. Each RIS element can be configured to modify the amplitude and the phase of its incident signal \cite{ChengLiTWC2021,LiGuoTCOM2022,KunduLi2022}. %The elements can be configured individually \cite{LiGuoTCOM2022} or in groups \cite{KunduLi2022}. 

%RIS technology can aid us in several applications including coverage extension by creating a virtual line-of-sight (LoS) channel when an obstacle blocks the direct channel between the transmitter and the receiver [?], co-channel interference mitigation [?], physical layer security [?], improving channel ranks by providing additional paths between the transmitter and the receiver [?], and simultaneous wireless information and power transfer (SWIPT) [?].

%While RIS has the potential to provide us with numerous benefits, the technology comes with several challenges [?]. For RIS to be considered the ideal choice for future wireless communication generations, these challenges must be resolved first. Some of the most important challenges are RIS hardware implementation choice [?], channel state information (CSI) acquisition [?], reflection optimization [?], and deployment position [?].  

RIS technology can aid us in coverage extension by generating a virtual line-of-sight (LoS) channel when an obstacle blocks the direct channel between the transmitter and the receiver \cite{BjornsonOzdoganWCL2020}. 
In such cases, appropriate adjustment of the reflection coefficients of the RIS elements is crucial. %Reflection optimization is the study of the optimal configuration of RIS elements \cite{HeShenWCL2022}. Suboptimal configuration of RIS elements leads to a lower data rate, higher power consumption, and inefficient usage of resources.
Reflection optimization plays an important part in the efficient usage of RIS and therefore is investigated in many recent studies \cite{LiuLiuTVT2023, WuZhangGLOBECOM2018, ZhangZhangJSAC2020, ZhangZhangISIT2020}.  %\cite{AbdallaICSCN2022, FengShenIOT2022, LiuLiuTVT2023}).

A commonly assumed scenario regarding RIS-aided communication systems is the multi-user setup, in which a base station communicates with several users with a RIS between them \cite{ZhengWuICL2020}. The problem is often formulated as a joint optimization of multiple variables \cite{LiCaiTCOM2021}. The alternating optimization (AO) technique \cite{TangMaICC2020} is often used to solve the problem. At each step of AO, one variable is optimized at a time while the rest are fixed \cite{SunWCL2023}. Several objectives can be considered for the optimization including sum-rate maximization \cite{PereraICC2022}, total transmit power minimization \cite{LiChenTWC2022}, energy efficiency maximization \cite{HuangTWC2019}, and maximization of minimum rate \cite{YangWCNC2020} or signal-to-interference-plus-noise ratio (SINR) \cite{NadeemTWC2020} for user fairness. 

On the other hand, there are ways to break the multi-user scenario into a number of single-user sub-problems. Methods such as RIS partitioning \cite{MakinArzykulovTWC2024} and distributed RIS deployment \cite{ZhangZhangTCOM2021} can help us achieve this goal. In RIS partitioning, the surface is divided into several segments, each serving a particular user \cite{KhaleelBasarJSTSP2022}. %The number of elements in each segment would be an additional variable that needs to be optimized \cite{KimKimTVT2022}.  
In a distributed deployment of RIS, instead of having one RIS serving multiple users, there will be multiple smaller surfaces each serving a particular user \cite{MaoYenerGLOBECOM2022}. Since each surface (or each surface partition) is now responsible for only one user, the setup can now be considered as multiple single-user communications. Therefore, many studies that consider a single-user setup have applications in more general multi-user setups too.

In the reflection optimization of a single-user setup, ideally, the amplitudes and the phase shifts of the reflection coefficients of the RIS elements are assumed to be continuously and independently adjustable \cite{WuTCOM2021}. In this scenario, the optimal solution can be achieved by aligning the RIS-aided paths with the direct path from the transmitter to the receiver while keeping the maximal possible amplitude of the reflection coefficients \cite{BjornsonWymeerschSPM2022}. However, achieving continuous adjustment for the phase shift is not possible in practice \cite{FaraRatajczak}. 
A more realistic assumption is to consider a finite number of discrete, evenly-spaced phase shifts \cite{PanJSTSP2022}. Unlike the continuous case, the optimization procedure is challenging for the discrete case. Several suboptimal approaches can be used such as quantizing the solution obtained from the continuous-phase-shift optimization problem \cite{LiuCST2021} or alternately optimizing the phase shifts \cite{WuZhangICASSP2019}. To obtain the globally optimal solution, methods such as exhaustive search and branch-and-bound (BB) \cite{WuZhangTCOM2020} can be used, but with high complexity. An efficient method is proposed in \cite{RenShenWCL2023} that can obtain the global optimality with linear complexity. 

While works in \cite{LiuCST2021,WuZhangICASSP2019,WuZhangTCOM2020,RenShenWCL2023} propose interesting solutions to determine RIS reflection coefficients with evenly-spaced discrete phase shifts, the assumption of having evenly-spaced phase shifts can be unrealistic in practical scenarios \cite{DaiAccess2020}. Hence the work in \cite{HashemiJiangTCOM2024} considers an arbitrary set of phase shifts of RIS reflection coefficients and achieves an optimal solution to determine RIS reflection coefficients in linear complexity.

%Considering the phase shifts to be evenly spaced may be infeasible due to hardware structure \cite{DaiAccess2020},  an algorithm with linear complexity has been developed in \cite{HashemiJiangTCOM2024} to achieve the global optimal for any arbitrary set of phase shifts.

All the aforementioned works consider that the phase shifts and amplitudes of RIS reflection coefficients can be independently adjusted. However, such an assumption is not feasible for a real RIS implementation \cite{ZhangICM2022}. In \cite{AbeywickramaZhangTCOM2020}, a practical model has been developed in which the amplitude is shown as a function of the phase shift, i.e., the amplitude and phase shift are coupled. The practical model has been used in several other works \cite{ZhaoPeng2024,ZhangZhangTVT2021}. In the literature, there has been no research that guarantees to achieve reflection optimization of RIS elements with the practical model. To fill this research gap, the following two major challenges should be addressed. 
\begin{itemize}
    \item Given a configuration set (here a {\it configuration set} is defined as a discrete set of reflection coefficient choices that a RIS element can take), how to optimally determine the RIS reflection coefficients for each channel realization of the system such that the maximal capacity is achieved? This challenge is referred to as {\it Capacity Maximization}.

\item How to select a configuration set for the system such that the average system capacity (averaged over all possible channel realizations) is maximized? This challenge is referred to as {\it Configuration Set Selection}.
    
\end{itemize}

We address both challenges in this paper. The contributions of this paper are summarized as follows 
 \begin{itemize}
     \item  Regarding capacity maximization with a given configuration set, we develop a method that yields the globally optimal RIS reflection coefficients that achieve capacity maximization. 
     The complexity of our method is linear with the number of RIS elements and linear with the size of the configuration set.
     \item To determine the optimal configuration set of the system, Monte Carlo simulations can be used, but with prohibitive complexity. To solve the problem in a much faster way, we theoretically prove that maximizing the average system capacity is approximately equivalent to maximizing the integral of a one-dimensional function. Thus, to get the optimal configuration set, we only need to find the configuration set in which the integral of the one-dimensional function is maximized. Our method is much faster than optimization based on Monte Carlo simulations, since for each configuration set, we only need to calculate an integral rather than running a large number of simulations. We also give a method to cut the running time of our method by almost half. 
      \end{itemize}

 %Additionally, we define a new challenge called ``Configuration Selection'' and propose a method for achieving the optimal solution. According to our numerical results, our methods provide apparent gains in terms of capacity.

The remainder of this paper is structured as follows. Section~\ref{sec:system_model} discusses the system model and the practical RIS model for coupled amplitude and phase shift of reflection coefficients. In Section \ref{sec:cap_max_section}, given a configuration set, our proposed method is presented to optimally solve the capacity maximization problem with linear complexity. In Section \ref{sec:conf_sel}, we present our method to optimally select a configuration set. Simulation results in Section \ref{sec:sim} show the performance of our proposed methods as well as comparison with other methods. Section \ref{sec:conclusion} concludes the work.

%commented only for conf
% The remainder of this paper is structured as follows. Section~II discusses the system model and the practical RIS configuration model. In Section III, our proposed method is presented to optimally solve the reflection optimization problem with polynomial complexity. Section IV introduces the configuration selection challenge and provides a method for solving it. Simulation results are presented and discussed in Section~V. The conclusions of this work can be found in Section VI.

\IEEEPARstart{}{} 
\section{System Model and Practical RIS Reflection Coefficient Model}\label{sec:system_model}
\subsection{System Model}
This paper considers that a transmitter communicates with a receiver as shown in Fig.~\ref{sysmod}. There is also a RIS between the transmitter and the receiver. The received signal at the receiver can be written as %$y[t] = h\cdot x[t] + w[t]$, 
\begin{equation}
    \label{received_signal_eq}
    y[t] = h\cdot x[t] + w[t],
\end{equation}
where \emph{$y[t] \in \mathbb{C}$} is the received signal, \emph{$x[t] \in \mathbb{C}$} is the transmitted signal, and $w[t] \sim \mathcal{N}_{\mathbb{C}}(0, N_0)$ is the additive white Gaussian noise (AWGN). 
%Assuming exactly one strong path (LOS) exists to and from RIS, all channels can be considered frequency flat 
The channel between the transmitter and the receiver, denoted as $h \in \mathbb{C}$, can be expressed as \cite{BjornsonWymeerschSPM2022}
\begin{equation}
    \label{Total_channel}
    h = h_0 + \sum_{n=1}^{N} h^{\prime}_n\theta_nh^{\prime\prime}_n.
\end{equation}
In \eqref{Total_channel}, $N$ is the number of RIS elements, \emph{$h_0 \in \mathbb{C}$} is the direct channel between the transmitter and the receiver, \emph{$h^{\prime}_n \in \mathbb{C}$} is the channel between the transmitter and the $n$th RIS element, \emph{$\theta_n=\beta_n e^{j\alpha_n} \in \mathbb{C}$} is the reflection coefficient of the $n$th RIS element, \emph{$h^{\prime\prime}_n \in \mathbb{C}$} is the channel between the $n$th RIS element and the receiver. 
Since \emph{h} is a complex number, it will have a magnitude and a phase. In this paper, $\angle x$ denotes the phase of the complex number $x$. 
\begin{figure}[!t]
\centering
\includegraphics[width=3.4in]{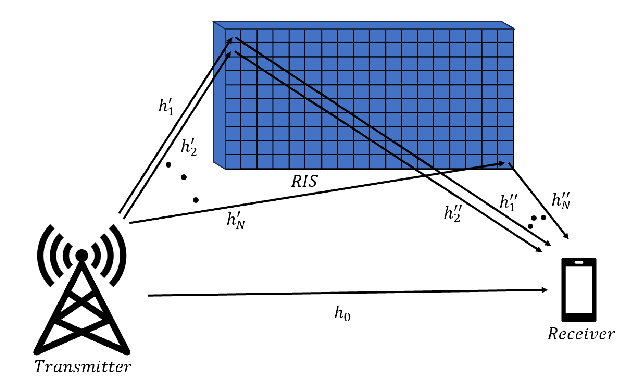}
\caption{The system model consisting of a transmitter, a receiver, and a RIS.}
\label{sysmod}
\end{figure}

The overall channel from the transmitter to the receiver through the $n$th RIS element can be expressed as
\begin{equation}
    \label{concat_channel_with_config}
    g_n = h^{\prime}_n\theta_nh^{\prime\prime}_n.
\end{equation}
It is also useful to define the cascaded channel coefficient $v_n \in \mathbb{C}$ for the $n$th RIS element as
\begin{equation}
    \label{concat_channel}
    v_n = h^{\prime}_nh^{\prime\prime}_n.
\end{equation}
\begin{figure}[!t]
\centering
\includegraphics[width=3.4in]{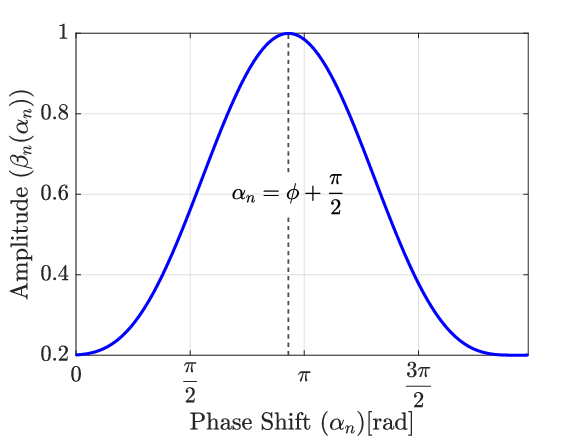}
\caption{Relationship between the phase shift and the amplitude.}
\label{amp_phas_relation}
\end{figure}
\begin{comment}
    
The received signal at the receiver can be written as:
\begin{equation}
    \label{received_signal_eq}
    y[t] = hx[t] + w[t]
\end{equation}
where \emph{$y[t] \in \mathbb{C}$} is the received signal, \emph{$x[t] \in \mathbb{C}$} is the transmitted signal, and $w[t] \sim \mathcal{N}_{\mathbb{C}}(0, N_0)$ is the additive white Gaussian noise (AWGN).
\end{comment}
The channel capacity from the transmitter to the receiver can be calculated as
\begin{equation}
    \label{capacity_eq}
    %C = B \log_2(1+ \text{SNR})= B\log_2(1+\frac{P|h|^2}{BN_0}) \quad \text{bits/s}.
C =  B\log_2(1+\frac{P|h|^2}{BN_0}) \quad \text{bits/s},
\end{equation}
in which $B$ is the transmitted signal bandwidth, $P$ is the transmitted signal power, $N_0$ is the noise power spectral density, and $\frac{P|h|^2}{BN_0}$ is the  signal-to-noise ratio (SNR).

\subsection{Practical RIS Reflection Coefficient Model}

In general, the reflection coefficient of the $n$th RIS element, denoted as $\theta_n$, is defined by two parameters, the amplitude ($\beta_n$) and the phase shift ($\alpha_n$). In other words, 
\begin{equation}
    \label{config_formula}
    \theta_n = \beta_n e^{j\alpha_n}.
\end{equation}

%\cite{YangZhangWCL2020, ZhangDaiTSP2021, BaiPanTWC2021}
In an ideal setup, the amplitude and the phase shift are independent and can take any possible value in a particular range
\begin{equation}
\beta_n \in [0,1], \alpha_n \in [0,2\pi).
\end{equation}
However, the ideal setup is not valid for practical RIS. According to \cite{AbeywickramaZhangTCOM2020}, for any circuit implementation of RIS, $\beta_n$ and $\alpha_n$ are not independent, and the relation between them can be expressed as
\begin{equation}
    \label{couple_formula}
    \beta_n(\alpha_n) = (1-\beta_{\min})\left(\frac{\sin(\alpha_n-\phi)+1}{2}\right)^\kappa + \beta_{\min},
\end{equation}
where $\beta_{\min}$, $\phi$, and $\kappa$ are all non-negative constants.
%In \cite{ZhangZhangTVT2021, ZhaoPengICL2024}, [?] authors define their system model based on \eqref{couple_formula}.
Such a practical model motivated us to consider the phase shift and the amplitude to be coupled rather than independent for our system model. For better illustration, Fig.~\ref{amp_phas_relation} demonstrates an example for the relationship between $\beta_n$ and $\alpha_n$.\footnote{According to \cite{AbeywickramaZhangTCOM2020}, we use $\beta_{\min} = 0.2$, $\kappa = 1.6$, and $\phi = 0.43\pi$ in the example in Fig.~\ref{amp_phas_relation}.}

Moreover, assuming a continuous adjustment for the phase shift is infeasible in practical setups. A better and more realistic assumption would be to consider a finite number of discrete choices of phase shift \cite{WuZhangTCOM2020}.
Therefore, in our setup, for each RIS element, say the $n$th element, $\theta_n$ is chosen from a {\it configuration set}, i.e., a finite set of $K$ choices:
\begin{equation}
    \label{configuration_set}
    \theta_n \in \{{\hat{\beta}_{1}e^{j\hat{\alpha}_{1}}}, {\hat{\beta}_{2}e^{j\hat{\alpha}_{2}}}, ..., {\hat{\beta}_{K}e^{j\hat{\alpha}_{K}}}\},
\end{equation}
in which amplitude $\hat{\beta}_{k}$ and phase shift $\hat{\alpha}_{k}$ are constants and satisfy (\ref{couple_formula}) for $k\in \{1,2,...,K\}$.

%In Section \ref{sec:cap_max}, for each channel realization, we will maximize the channel capacity given the $K$ choices of $\theta_n$ as shown in (\ref{configuration_set}). %Then in Section \ref{sec:conf_sel}, we will discus how to determine the $K$ choices. 

\section{Optimal Solution for Capacity Maximization}\label{sec:cap_max_section} %Given the $K$ Choices for Reflection Coefficient}

For a system with a given configuration set as shown in (\ref{configuration_set}), our goal is to maximize the capacity for each channel realization of the system by finding the optimal reflection coefficients of the RIS elements. As seen in \eqref{capacity_eq}, $|h|$ should be maximized to get the maximal capacity. Therefore, $\theta_n$ should be chosen in a way that the summation in \eqref{Total_channel} has the largest possible magnitude. The capacity maximization problem for any given channel realization of the system (i.e., given $h_0, v_1,v_2,...,v_N$), therefore, can be formulated as:
\begin{equation}
\label{probform}
\begin{aligned}
\max_{\theta_1,\theta_2...,\theta_N} \quad & |h|\\
\textrm{s.t.} \quad &    \theta_1,\theta_2...,\theta_N \in \{{\hat{\beta}_{1}e^{j\hat{\alpha}_{1}}}, {\hat{\beta}_{2}e^{j\hat{\alpha}_{2}}}, ..., {\hat{\beta}_{K}e^{j\hat{\alpha}_{K}}}\}.
\end{aligned}
\end{equation}

We will optimize the problem in \eqref{probform} in two steps. Consider $\theta_n^*$ as the optimal reflection coefficient for the $n$th RIS element and $h^*$ as the resulting optimal channel between the transmitter and the receiver. First, assuming $\angle h^*$ (i.e., the phase of $h^*$) is known, we will develop a method for determining $\theta_n^*$ for all elements. We will discuss this step in detail in Section III-A. 

In practice, $\angle h^*$ is not known at the beginning. Thus, in the next step, we have to go through all possibilities of $\angle h^*$ and find the one with the largest $|h|$. This may seem an impossible task since there will be infinite possibilities for $\angle h^*$. However, we will prove that by going through just a finite number of possibilities for $\angle h^*$, we will be able to find the optimal solution. Sections \ref{sec:inf_to_finite} $\sim$ \ref{sec:one_to_all_elements} give details of our method and related analysis, insights, and proofs. % some insight into how infinite possibilities for $\angle h^*$ can be converted into a finite number of possibilities. The number of possibilities of $\angle h^*$ that we have to go through is calculated in Section III-C. In Section III-D a method is provided for determining the possibilities of $\angle h^*$ that have to be checked for a single element. Section III-E describes how the results of Section III-D can be used to determine the possibilities of $\angle h^*$ for all elements. In Section III-F, an algorithm is proposed for calculating $\theta_n^*$ by checking all possibilities.

\begin{comment}
    
Next, we go through all the possibilities of $\angle h^*$. For each possibility, we calculate $\theta_n^*$ and $|h^*|$. In the end, we choose the possibility with the maximum $|h^*|$, and the $\theta_n^*$ of that possibility will be our optimal solution.
\end{comment}

\subsection{Determining $\theta_n^*$ by Assuming $\angle h^*$ is Known} \label{sec:cap_max}
Assume $\angle h^*$ is known. Consider the $n$th RIS element. According to \eqref{configuration_set}, there will be $K$ choices for $\theta_n$. Thus, there will also be $K$ choices for $g_n$ (expression of $g_n$ is given in (\ref{concat_channel_with_config})), i.e.,  
$g_n \in \{g_{n,1},g_{n,2},...,g_{n,K}\}$, with 
\begin{equation}\label{eq:g_nk_def}
g_{n,k} = v_n \hat{\beta}_k e^{j\hat{\alpha}_k}
\end{equation}
for $k=1,2,...,K$. 

Let us define $\langle h^*,g_{n,i}\rangle$ as:
\begin{equation}
    \label{inner_prod}
    \langle h^*,g_{n,i}\rangle = |h^*|\cdot |g_{n,i}|\cos(\angle h^*-\angle g_{n,i}).
\end{equation}
If we view complex numbers $h^*$ and $g_{n,i}$ as vectors in a complex plane, then $\langle h^*,g_{n,i}\rangle$ is actually the real inner product of vector $h^*$ and vector $g_{n,i}$. 

We have the following theorem.
\begin{theorem}
\label{theo1}
    If $\langle h^*,g_{n,i}\rangle$ is the maximum among $\{\langle h^*,g_{n,1}\rangle,\langle h^*,g_{n,2}\rangle,...,\langle h^*,g_{n,K}\rangle\}$, then the optimal reflection coefficient of the $n$th RIS element, denoted as $g_n^*$, is $g_{n,i}$.
\end{theorem}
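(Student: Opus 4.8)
The plan is to exploit the fact that the magnitude of the optimal channel equals its own projection onto the direction $\angle h^*$, combined with the linearity of that projection across the RIS elements. First I would fix the reference direction $\phi = \angle h^*$ and, for an arbitrary complex number $h$, use the elementary bound $\mathrm{Re}(h e^{-j\phi}) = |h|\cos(\angle h - \phi) \le |h|$, with equality precisely when $\angle h = \phi$. Applying this to the total channel and substituting $h = h_0 + \sum_n g_n$ from \eqref{Total_channel}, the key observation is that the projection splits additively, $\mathrm{Re}(h e^{-j\phi}) = \mathrm{Re}(h_0 e^{-j\phi}) + \sum_{n=1}^N \mathrm{Re}(g_n e^{-j\phi})$, and that each summand is exactly $\langle h^*, g_n\rangle / |h^*|$ by the definition in \eqref{inner_prod}. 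Because the choices of the different $g_n$ are made independently from their respective $K$-element sets, the total projection is maximized by choosing, for each element separately, the $g_{n,i}$ that maximizes $\langle h^*, g_{n,i}\rangle$.

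Next I would argue that this greedy, per-element maximization actually attains the global optimum rather than merely a bound. Let $\hat h$ denote the channel obtained by the above element-wise choice, and let $P$ be the resulting maximum total projection. On one hand, evaluating the projection at the true optimum gives $\mathrm{Re}(h^* e^{-j\phi}) = |h^*|$ (equality holds because $\angle h^* = \phi$), so $P \ge |h^*|$ since $\hat h$ maximizes the projection. On the other hand, $|\hat h| \ge \mathrm{Re}(\hat h e^{-j\phi}) = P \ge |h^*|$. Since $h^*$ is by definition the configuration of largest magnitude, $|h^*| \ge |\hat h|$, and the chain of inequalities collapses to equalities, giving $|\hat h| = |h^*|$. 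Hence the element-wise maximizer of $\langle h^*, g_{n,i}\rangle$ yields an optimal configuration, which is the claim of the theorem: the optimal $g_n^*$ is the maximizing $g_{n,i}$.

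I expect the main obstacle to be the second step, namely the bootstrapping that converts a decoupled upper bound on the projection into a statement about the true optimum. The delicate point is that $\angle h^*$ is assumed known, so the tautology $\mathrm{Re}(h^* e^{-j\angle h^*}) = |h^*|$ is available to anchor the bound and force all inequalities to be tight simultaneously; without knowing $\angle h^*$ this anchoring would fail, which is exactly why the later sections must search over candidate values of $\angle h^*$. A secondary issue worth noting is the treatment of ties: if several $g_{n,i}$ attain the maximum inner product, the argument shows each of them produces an optimal channel, so the statement should be read as \emph{choosing a maximizer is optimal} rather than as asserting uniqueness.
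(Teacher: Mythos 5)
Your proof is correct, but it takes a genuinely different route from the paper's. The paper argues locally by contradiction: assuming the optimum uses $g_{n,l}\neq g_{n,i}$ at element $n$, it swaps in $g_{n,i}$ and expands $|h^*-g_{n,l}+g_{n,i}|^2$ via the identity $|a+b|^2=|a|^2+|b|^2+2\langle a,b\rangle$, arriving at $|h^\dag|^2-|h^*|^2=|g_{n,l}-g_{n,i}|^2+2(\langle h^*,g_{n,i}\rangle-\langle h^*,g_{n,l}\rangle)\ge 0$, which contradicts the assumed strict decrease. You instead argue globally: the projection $\mathrm{Re}(h e^{-j\angle h^*})$ is additive across elements, so its maximum $P$ is attained by the element-wise greedy choice $\hat h$, and the sandwich $|h^*|\ge|\hat h|\ge P\ge \mathrm{Re}(h^*e^{-j\angle h^*})=|h^*|$ forces equality. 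Both arguments are sound and both resolve ties the same way (a maximizer of the inner product yields \emph{an} optimal configuration). The paper's exchange argument is more elementary and mechanical, needing only one algebraic expansion and no appeal to the structure of the whole optimization; your projection argument is arguably more illuminating, since the linearity of the projection functional and the anchoring identity $\mathrm{Re}(h^*e^{-j\angle h^*})=|h^*|$ are exactly the conceptual engine behind the paper's subsequent strategy of sweeping over candidate values of $\angle h^*$ in Sections III-B through III-E. One minor degenerate case worth a word in your version: the normalization by $|h^*|$ presumes $|h^*|\neq 0$; if $|h^*|=0$ all inner products vanish and the claim is vacuous.
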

    \begin{proof}
Please refer to Appendix A.

    \end{proof}
According to ~\cref{theo1}, among the $K$ choices for $g_n$ of the $n$th RIS element, all we have to do is to find the one that has the largest $\langle h^*,g_{n,i}\rangle$ for $i \in \{1,2,...,K\}$.

 Also, according to \eqref{eq:g_nk_def}, we have:
\begin{equation}
    \label{concat_channel_mag}
    |g_{n,i}| =  |v_n|\hat{\beta}_{i}.
\end{equation}
Using \eqref{concat_channel_mag}, the right-hand side of \eqref{inner_prod} can now be updated as:
\begin{equation}
    \label{inner_prod_updated}
    \langle h^*,g_{n,i}\rangle = |h^*|\cdot |v_n|\hat{\beta}_{i}\cos(\angle h^*-\angle g_{n,i})
\end{equation}
in which $\angle g_{n,i} = \angle (v_n \hat{\beta}_i e^{j\hat{\alpha}_i}) = \angle v_n+ \hat{\alpha}_i$ (from \eqref{eq:g_nk_def}).

As seen in \eqref{inner_prod_updated}, $|h^*|$ and $|v_n|$ are the same for all the members of $\{\langle h^*,g_{n,1}\rangle,\langle h^*,g_{n,2}\rangle,...,\langle h^*,g_{n,K}\rangle\}$. Thus, finding the maximum $\langle h^*,g_{n,i}\rangle$ would be equivalent to finding the largest $\hat{\beta}_i\cos(\angle h^*-\angle g_{n,i})$. So if $\angle h^*$ is known, we can determine the optimal reflection coefficient for each RIS element (say the $n$th RIS element) by finding the $g_{n,i}$ with the largest $\hat{\beta}_{i}\cos(\angle h^*-\angle g_{n,i})$. From here on, we refer to the largest $\hat{\beta}_{i}\cos(\angle h^*-\angle g_{n,i})$ as $\beta_{n}^*\cos(\angle h^*-\angle g_{n}^*)$ and the corresponding $g_{n,i}$ as $g_{n}^*$.

\subsection{Converting Infinite Possibilities of $\angle h^*$ to a Finite Number of Possibilities}\label{sec:inf_to_finite}

In Section \ref{sec:cap_max}, we find the optimal reflection coefficient for each RIS element with given known $\angle h^*$. However, in reality, $\angle h^*$ is unknown at the beginning. One method is to go through all possibilities of $\angle h^*$ to find the best one. %For each possibility, calculate $g_n^*$ for each element according to Section III-A. Store the values of $|h^*|$ and $\{g_1^* , g_2^* ,..., g_N^*\}$. At the end, find the largest $|h^*|$ and the corresponding $\{g_1^* , g_2^* ,..., g_N^*\}$ will give us the optimal solution.
Since there are infinite possibilities for $\angle h^*$, this method will not be feasible. But as we will soon show, there is a way to go through a finite number of possibilities for $\angle h^*$ and still be able to find the global optimal solution. This is possible because $g_n^*~ (n=1,2,...,N)$ keep unchanged over a range of $\angle h^*$. 

To get a better picture, Fig.~\ref{curves_examp} shows an example for the $n$th RIS element. In this example, there are $K=4$ choices for $\theta_n$, where the four choices are selected randomly from the curve in Fig.~\ref{amp_phas_relation}. The resulting $g_{n,1},g_{n,2},g_{n,3},g_{n,4}$ in this example are 1.7279, 3.0369, 4.0841, 5.6549 rad, respectively. Recall that we should find the largest one among $\hat{\beta}_{i}\cos(\angle h^*-\angle g_{n,i})$, $i\in\{1,2,3,4\}$. Fig.~\ref{curves_examp} shows how the four curves $\hat{\beta}_{i}\cos(\angle h^*-\angle g_{n,i})$  ($i=1,2,3,4$) change with $\angle h^*\in [0,2\pi)$. For presentation brevity, we call curve $\hat{\beta}_{i}\cos(\angle h^*-\angle g_{n,i})$ as ``curve $\angle g_{n,i}$" in the legend of Fig.~\ref{curves_examp} and in the following discussion.
\begin{comment}
    
$\angle v_n$ is set to $0.3\pi$.
\end{comment}

According to Section \ref{sec:cap_max}, in Fig.~\ref{curves_examp} we are looking for the maximal of the four curves $\angle g_{n,i}$  ($i=1,2,3,4$) at each $\angle h^*$ value. The maximal of the four curves is the red curve shown in Fig.~\ref{curves_examp}.  In Fig.~\ref{curves_examp}, $o_1,o_2,o_3,o_4,p_1,p_2,p_3,p_4$ are $\angle h^*$ values of the intersections of the curves. From Fig.~\ref{curves_examp}, when $\angle h^*$ changes from 0 to $p_1$, curve $\angle g_{n,4}$ is always above the other three curves, and thus, $g^*_n$ remains unchanged (keeps as $g_{n,4}$). Similarly, when $\angle h^*$ changes within $[p_1,p_2]$, $[p_2,p_3]$, $[p_3,p_4]$, or $[p_4,2\pi)$, $g^*_n$ remains as $g_{n,1}$, $g_{n,2}$, $g_{n,3}$, or $g_{n,4}$, respectively. 

We refer to the interval of $\angle h^*$ (within $[0,2\pi)$) over which one curve stays above all other curves as the {\it active interval} of the curve. Thus, in Fig.~\ref{curves_examp}, the active intervals of curves $\angle g_{n,1}$, $\angle g_{n,2}$, and $\angle g_{n,3}$ are $[p_1,p_2]$, $[p_2,p_3]$, and $[p_3,p_4]$, respectively, and the active interval of curve $\angle g_{n,4}$ is the union of two sub-intervals at the two sides of $[0,2\pi)$: $[p_4,2\pi) \cup [0,p_1]$. For presentation simplicity, we represent $[p_4,2\pi) \cup [0,p_1]$ as $[p_4,p_1]$. As a summary, we have the following {\bf definition} for an interval of $\angle h^*$ written as $[x_1,x_2]$ where $x_1,x_2 \in [0,2\pi)$: 
\begin{itemize}
\item if $x_1<x_2$, then the interval is a continuous interval from $x_1$ to $x_2$ (e.g., the active intervals of curves $\angle g_{n,1}$, $\angle g_{n,2}$, and $\angle g_{n,3}$); 
\item if  $x_1>x_2$, then the interval is the union of two continuous sub-intervals located at the two sides of $[0,2\pi)$, which is $[x_1,2\pi) \cup [0,x_2]$ (e.g., the active interval of $\angle g_{n,4}$). 
\end{itemize}
In either case, $x_1$ and $x_2$ are called the left and right boundary of interval $[x_1,x_2]$.

We call the right boundary ($\angle h^*$ value) of the active interval of a curve $\angle g_{n, i}$ as the {\it active intersection} of the curve. For example, $p_2$ is the active intersection for curve $\angle g_{n, 1}$, and $p_1$ is the active intersection for curve $\angle g_{n, 4}$.

%Please note that over the interval that one curve stays above all the others, $g^*_n$ remains unchanged. When the maximal curve changes, i.e., a new curve becomes the maximal curve, $g^*_n$ also changes. From now on, . For example, intervals $[p_1,p_2]$ or $[p_2,p_3]$ are two active intervals in Fig.~\ref{curves_examp}. In the first interval $g^*_n=g_{n, 1}$  because $\beta_{n, 1}\cos(\angle h^*-\angle g_{n, 1})$ is above other curves and on the second interval $g^*_n=g_{n, 2}$. 
%The right bound of an active interval is called an active intersection. For example, $p_2$ is the active intersection for $g_{n, 1}$.

\begin{comment}
    
$g_n^*$ stays the same over certain intervals. From now on, we will refer to these intervals as active intervals. Also, note that $g_n^*$ changes at certain intersection points. We will refer to the intersections as active intersections. For example, when $\angle h^*$ is in the interval between $p_1$ and $p_2$, $g_n^* = g_{n, 1}$. Thus $[p_1,p_2]$ will be an active interval for $g_{n,1}$. At $p_2$, $g_n^*$ switches from $g_{n, 1}$ to $g_{n, 2}$ as the configuration corresponding to the maximum of the curve changes. Thus, $p_2$ is an active intersection.
\end{comment}

\begin{figure}[!t]
\centering
\includegraphics[width=3.4in]{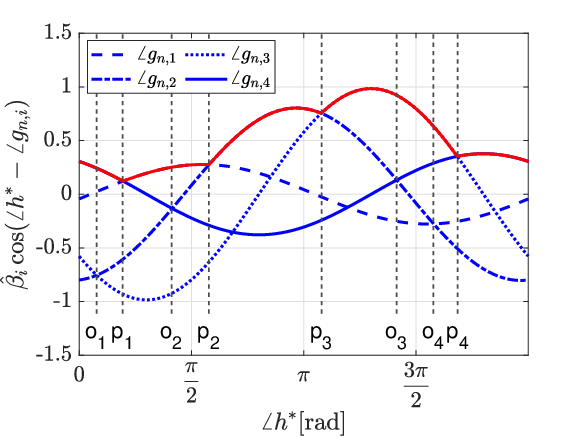}
\caption{$\hat{\beta}_i\cos(\angle h^*-\angle g_{n,i})$ versus $\angle h^*\in [0,2\pi)$ for the $n$th RIS element.}
\label{curves_examp}
\end{figure}
Since $g_n^*$ (and $\theta_n^*$) remains unchanged if $\angle h^*$ is within an active interval, we do not need to go through the infinite possibilities of $\angle h^*$. As long as we determine the active intervals, we can get $g_n^*$ (and $\theta_n^*$) in each active interval. For example, for Fig.~\ref{curves_examp}, when $\angle h^*$ is within active interval $[p_1,p_2]$, $[p_2,p_3]$, $[p_3,p_4]$, and $[p_4,p_1]$, $g_n^*$ is $g_{n,1}$, $g_{n,2}$, $g_{n,3}$, and $g_{n,4}$, respectively.

We will show how to determine the active intervals of one RIS element in Section \ref{sec:interval_one_element}, and extend the result to get the active intervals of all RIS elements in Section \ref{sec:interval_all_elements}.

%To proceed, first, we need to find out where the active intersection points are located.
 \subsection{Determining Active Intervals of One RIS Element}\label{sec:interval_one_element}
For each RIS element, say the $n$th element, its $g_n$ has $K$ choices: $g_{n,1},g_{n,2},...,g_{n,K}$, corresponding to $K$ curves called curve $\hat{\beta}_i\cos(\angle h^*-\angle g_{n,i})$ ($i=1,2,...,K$) over $\angle h^* \in [0,2\pi)$, similar to the curves in Fig.~\ref{curves_examp}. Recall that $\angle g_{n,i} = \angle v_n+ \hat{\alpha}_i$.

To find the intersections ($\angle h^*$ values) of curves $\hat{\beta}_i\cos(\angle h^*-\angle g_{n,i})$ and $\hat{\beta}_l\cos(\angle h^*-\angle g_{n,l})$, we need to solve the following equation:
     \begin{equation}
    \label{intersec_points_eq}
    \begin{aligned}
    &\hat{\beta}_i\cos(\angle h^*-\angle g_{n,i}) = \hat{\beta}_l\cos(\angle h^*-\angle g_{n,l}), \ i\neq l\\ 
    \begin{comment}
        
    \longrightarrow \quad& \beta_i(\cos(\angle h^*)\cos(\angle g_{n,i}) + \sin(\angle h^*)\sin(\angle g_{n,i})) \\
    =& \beta_l(\cos(\angle h^*)\cos(\angle g_{n,l}) + \sin(\angle h^*)\sin(\angle g_{n,l})) \\
    \longrightarrow \quad& \beta_i\cos(\angle h^*)\cos(\angle g_{n,i}) + \beta_i\sin(\angle h^*)\sin(\angle g_{n,i}) \\
    =& \beta_l\cos(\angle h^*)\cos(\angle g_{n,l}) + \beta_l\sin(\angle h^*)\sin(\angle g_{n,l}) \\
    \longrightarrow \quad& \beta_i\sin(\angle h^*)\sin(\angle g_{n,i}) - \beta_l\sin(\angle h^*)\sin(\angle g_{n,l})  \\
    =& \beta_l\cos(\angle h^*)\cos(\angle g_{n,l}) - \beta_i\cos(\angle h^*)\cos(\angle g_{n,i}) \\
    \longrightarrow \quad& (\beta_i\sin(\angle g_{n,i}) - \beta_l\sin(\angle g_{n,l}))\sin(\angle h^*)  \\
    =& (\beta_l\cos(\angle g_{n,l}) - \beta_i\cos(\angle g_{n,i}))\cos(\angle h^*) \\  
    \end{comment}
    \longrightarrow \quad& \tan(\angle h^*) =\frac{\hat{\beta}_l\cos(\angle g_{n,l}) - \hat{\beta}_i\cos(\angle g_{n,i})}{\hat{\beta}_i\sin(\angle g_{n,i}) - \hat{\beta}_l\sin(\angle g_{n,l})}. \\      
    \end{aligned}
    \end{equation}
From \eqref{intersec_points_eq}, the two curves intersect at two points in the range of $[0,2\pi)$: one at $\angle h^* = \arctan(\frac{\hat{\beta}_l\cos(\angle g_{n,l}) - \hat{\beta}_i\cos(\angle g_{n,i})}{\hat{\beta}_i\sin(\angle g_{n,i}) - \hat{\beta}_l\sin(\angle g_{n,l})}) \mod 2\pi$ and the other at $\angle h^* = (\pi + \arctan(\frac{\hat{\beta}_l\cos(\angle g_{n,l}) - \hat{\beta}_i\cos(\angle g_{n,i})}{\hat{\beta}_i\sin(\angle g_{n,i}) - \hat{\beta}_l\sin(\angle g_{n,l})})) \mod 2\pi$. Note that the two intersections are $\pi$ radians apart. 

For the $K$ curves of the $n$th RIS element, the total number of intersections will be ${2}{{K}\choose{2}}$. Among all the intersections, only the active intersections matter to us. For example, as seen in Fig.~\ref{curves_examp}, intersections $o_1$ and $o_2$ are not active and therefore not of any importance to our optimization goal.
 
Let us consider the intersections on a single curve. It has two intersections with any of the other $K-1$ curves. Thus, the total number of intersections on a single curve will be $2(K-1)$. For example, in Fig.~\ref{curves_examp}, since there are 4 curves in total, each curve has $2(4-1) = 6$ intersections on it. 

Considering the $2(K-1)$ intersections on each curve, there will be an interval between every two consecutive intersections, making the total number of intervals on each curve being $2(K-1)$.\footnote{The $2(K-1)$ intersections partition range $[0,2\pi)$ of $\angle h^*$ into $2(K-1)+1$ intervals. As aforementioned, the union of the interval from $0$ to the left-most intersection and the interval from the right-most intersection to $2\pi$ is viewed as a single interval.} Note that by an ``interval", we mean an interval of $\angle h^*$.

\begin{comment}  
Let us call the interval along which the considered curve is maximum among all curves an active interval.
\end{comment} 
In the following theorem, we prove that the considered curve can have at most one active interval among all $2(K-1)$ intervals.

\begin{theorem}
\label{theo2}
For each curve of the $n$th RIS element, at most one of the $2(K-1)$ intervals is active.
\end{theorem}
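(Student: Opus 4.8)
The plan is to recast the pointwise maximization over the $K$ curves as a support-function (convex-hull) problem and to exploit the fact that the set of directions selecting a given vertex of a convex set is connected.

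First I would represent each candidate by a fixed planar vector. Writing $u_i = \hat\beta_i(\cos\angle g_{n,i}, \sin\angle g_{n,i})$ and $e(\theta) = (\cos\theta, \sin\theta)$ for the unit vector pointing in direction $\theta=\angle h^*$, the curve $\hat\beta_i\cos(\angle h^*-\angle g_{n,i})$ equals the real inner product $\langle u_i, e(\theta)\rangle$, which is consistent with the inner-product interpretation already used before \cref{theo1}. With this identification, curve $\angle g_{n,i}$ being the maximum at a given $\angle h^*$ is exactly the statement that $u_i$ attains $\max_k \langle u_k, e(\theta)\rangle$, i.e., that $u_i$ is a support point of the set $\{u_1,\dots,u_K\}$ in direction $e(\theta)$.

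The key step is to show that the active set of curve $\angle g_{n,i}$, namely $S_i = \{\theta\in[0,2\pi): \text{curve } i \text{ is the maximum at } \theta\}$, is a single (possibly empty) arc. The condition defining $S_i$ is $\langle u_i-u_k, e(\theta)\rangle\ge 0$ for every $k\ne i$. Each such inequality forces $e(\theta)$ into a closed half-plane through the origin with inward normal $u_i-u_k$, so $e(\theta)$ must lie in the convex cone $C_i=\bigcap_{k\ne i}\{v: \langle u_i-u_k, v\rangle\ge 0\}$. I would then argue that $C_i\cap S^1$ (where $S^1$ denotes the unit circle) is connected: for any two unit vectors $a,b\in C_i$, every point on the minor arc joining them is a positive multiple of a convex combination $\lambda a+(1-\lambda)b$, which lies in $C_i$ because $C_i$ is convex and closed under positive scaling; hence the whole minor arc lies in $C_i\cap S^1$. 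Therefore $S_i$ is a single arc.

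Finally I would connect this to the $2(K-1)$ intervals. Within any one of those intervals curve $i$ crosses none of the other curves, so whether curve $i$ is the maximum is constant across the interval; thus $S_i$ is a union of whole intervals from the partition. Being a single arc, $S_i$ can coincide with at most one of them, since if $S_i$ spanned two intervals it would contain an interior intersection of curve $i$, where curve $i$ is crossed and hence loses the maximum on one side, a contradiction. This yields the claim that at most one of the $2(K-1)$ intervals is active. The \textbf{main obstacle} is this middle step: recognizing that ``curve $i$ is on top'' is a support/normal-cone condition and then rigorously proving connectedness on the circle. The circular topology is the delicate part, as one must show the intersection of the half-plane constraints carves out a single arc rather than two antipodal pieces, which is precisely where the convex-cone (minor-arc) argument is needed; a naive pairwise sign analysis of the sinusoids $f_i-f_k$ would force awkward case distinctions that this geometric viewpoint avoids.
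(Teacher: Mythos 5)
Your proof is correct, but it takes a genuinely different route from the paper's. The paper argues directly on the sinusoids: it uses the facts that any two curves $\hat{\beta}_i\cos(\angle h^*-\angle g_{n,i})$ and $\hat{\beta}_l\cos(\angle h^*-\angle g_{n,l})$ intersect at exactly two points that are $\pi$ radians apart, and that shifting $\angle h^*$ by $\pi$ flips the sign of each curve; hence if $(q_1,q_2)$ is active with $q_2$ an intersection with curve $l$ and $q_1$ an intersection with curve $l'$, then curve $i$ is dominated by curve $l$ on $(q_2,q_2+\pi)$ and by curve $l'$ on $(q_1-\pi,q_1)$, and these two intervals together with $(q_1,q_2)$ cover all of $[0,2\pi)$, leaving no room for a second active interval. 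Your argument instead identifies each curve with the linear functional $\langle u_i, e(\theta)\rangle$ and shows that the set of maximizing directions is the trace on the unit circle of the convex normal cone $C_i=\bigcap_{k\ne i}\{v:\langle u_i-u_k,v\rangle\ge 0\}$, hence a single arc. The two proofs encode the same underlying fact (the half-plane structure of $\{\theta:\langle u_i-u_k,e(\theta)\rangle\ge 0\}$ is exactly the ``two intersections $\pi$ apart'' property), but yours is more conceptual and generalizes immediately (e.g., it makes clear why the active arc has length at most $\pi$ whenever $K\ge 2$), while the paper's is more elementary and self-contained. One small caveat in your connectedness step: the minor-arc argument is ill-posed when $a$ and $b$ are antipodal, and in the degenerate case where $C_i$ collapses to a line, $C_i\cap S^1$ is two antipodal points and is \emph{not} a single arc. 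This does not break the theorem --- a set with empty interior cannot contain an active interval --- but you should either exclude that case explicitly or note that $S_i$ is a single arc whenever it has nonempty interior, which is all the final step of your argument requires.
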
 
    \begin{proof}
Please refer to Appendix B.

\end{proof}
According to ~\cref{theo2}, each curve will have at most one active interval. Recall that there are $K$ curves associated with each RIS element. Therefore, there will be at most $K$ active intervals for each RIS element. 
\begin{comment}
The intervals are connected, meaning that the right bound of the active interval of a particular curve is the left bound of the active interval of another curve. 
\end{comment}
This means that the number of active intersections for each RIS element can at most be $K$. Next, we will develop a method for finding the active intersections of each RIS element, say the $n$th RIS element.

\begin{comment}
\begin{figure}[!t]
\centering
\includegraphics[width=3.5in]{theo2_intervs.png}
\caption{Demonstration of active and inactive intervals}
\label{theo2_intervs}
\end{figure}
\end{comment}

%\subsection{Determining the active intersections of a single element}
%In this section, we propose two methods for calculating the active intersections of a single element.
%\subsubsection{Method 1 (Common range based)}
For the $n$th RIS element, it has $K$ curves. Suppose $I_{n,i,l}$ is the interval at which curve $\hat{\beta}_i\cos(\angle h^*-\angle g_{n,i})$ is above curve $\hat{\beta}_l\cos(\angle h^*-\angle g_{n,l})$, i.e., 
$\hat{\beta}_i\cos(\angle h^*-\angle g_{n,i}) > \hat{\beta}_l\cos(\angle h^*-\angle g_{n,l})$. $I_{n,i,l}$ can be determined with the help of the intersection points of the two curves as discussed at the beginning of Section \ref{sec:interval_one_element}. For the example in Fig.~\ref{curves_examp}, we have $I_{n,2,3} = [o_1,p_3]$ and $I_{n,3,2} = [p_3,o_1]$. 

For curve $\hat{\beta}_i\cos(\angle h^*-\angle g_{n,i})$, its active interval denoted as $I_{n, i}$ can be determined as 
     \begin{equation}
    \label{active_interv_eq}
    \begin{aligned}
    I_{n, i} = \bigcap_{\substack{l=1 \\ l \neq i}}^K I_{n , i, l}.
    \end{aligned}
    \end{equation}
Note that $\bigcap$ refers to the common range of intervals. The following function helps us find the common range of two intervals.

\begin{comment}
    
The key aspect of this method is a function called Common Range Calculator (CRC). First, we discuss this function and then we explain how this function is used to find the active intersection.
\end{comment}

\emph{Common range calculator (CRC):}
 The CRC function takes two intervals $[l_1,r_1]$ and $[l_2,r_2]$ as input intervals, and gets interval $[l_3,r_3]$ as the common range (CR) of the two input intervals. Here the intervals follow the interval {\bf definition} in Section \ref{sec:inf_to_finite}, and `$l$' and `$r$' mean the left and right boundary of an interval, respectively.  Table \ref{CRC_table} summarizes the CRC calculation results for different cases (in which ``No CR'' means that the two input intervals do not have a range in common), and Fig.~\ref{table_examples} provides a demonstration for each case in Table \ref{CRC_table}.

 %the left and right bounds of two intervals ($l_1, r_1, l_2, r_2 \in [0,2\pi)$) as input arguments, calculates the range in common between the two provided intervals, and outputs the left and right bounds of the calculated range ($l_3, r_3\in [0,2\pi)$). Based on the relation between the left and right bounds, we can determine the common range according to Table \ref{CRC_table}. ``No CR'' means the two intervals do not have a range in common. To get a better picture, Fig.~\ref{table_examples} provides an example for each case considered in Table \ref{CRC_table}.
 
By using the CRC, we can get $I_{n, 1}, I_{n, 2}, ..., I_{n, K}$, i.e., the $K$ active intervals of the $K$ curves associated with the $n$th RIS element. Based on the active intervals, we can get the active intersections for the active intervals of the $n$th RIS element.% (recalling that the right boundary of an active interval is called its active intersection). 

\begin{table}[!t]
\caption{CRC Calculation Results \label{CRC_table}}
\centering
\begin{tabular}{|c|c|c|c|}
\hline
Index &Case & $l_3$ & $r_3$\\
\hline
$I$&$(l_1<r_1) \& (l_2<r_2)$& &\\ 
&$\& \max(l_1,l_2) < \min(r_1,r_2)$ & $\max(l_1,l_2)$ &$\min(r_1,r_2)$\\
\hline
$II$&$(l_1<r_1) \& (l_2<r_2)$& &\\ 
&$\& \max(l_1,l_2) > \min(r_1,r_2)$ & No CR & No CR\\
\hline
$III$&$(l_1>r_1) \& (l_2>r_2)$& $\max(l_1,l_2)$ &$\min(r_1,r_2)$\\ 
\hline
$IV$&$(l_1>r_1) \& (l_2<r_2)$& &\\ 
&$\& (r_2<l_1) \& (r_1<l_2)$ & No CR & No CR\\
\hline
$V$&$(l_1>r_1) \& (l_2<r_2)$& &\\ 
&$\& (r_2>l_1)$ & $\max(l_1,l_2)$ & $\max(r_1,r_2)$\\
\hline
$VI$&$(l_1>r_1) \& (l_2<r_2)$& &\\ 
&$\& (l_2<r_1)$ & $\min(l_1,l_2)$ & $\min(r_1,r_2)$\\
\hline

$VII$&$(l_1<r_1) \& (l_2>r_2)$& &\\ 
&$\& (r_2<l_1) \& (r_1<l_2)$ & No CR & No CR\\
\hline
$VIII$&$(l_1<r_1) \& (l_2>r_2)$& &\\ 
&$\& (r_2>l_1)$ & $\min(l_1,l_2)$ & $\min(r_1,r_2)$\\
\hline
$IX$&$(l_1<r_1) \& (l_2>r_2)$& &\\ 
&$\& (l_2<r_1)$ & $\max(l_1,l_2)$ & $\max(r_1,r_2)$\\
\hline

\end{tabular}
\end{table}
    
\begin{figure}[!t]

\centering
\includegraphics[width=3.4in]{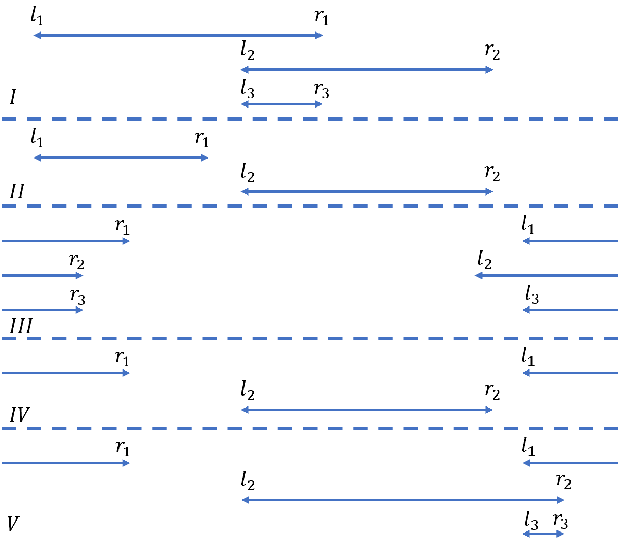}\\
\includegraphics[width=3.4in]{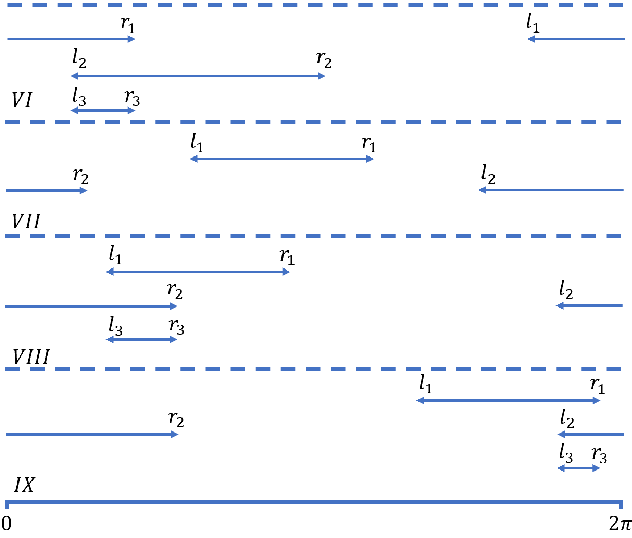}\\
\caption{Demonstration for the nine cases in Table \ref{CRC_table}.}
\label{table_examples}
\end{figure}

\subsection{Determining the Active Intersections of All Elements} \label{sec:interval_all_elements}
%According to the methods described in Section III-D, we can calculate all active intersections for a single element which is a list with a length of at most $K$. Let us discuss how we can calculate the active intersections of all elements just by knowing the active intersections of a single element.

After we get the active intersections for the $n$th RIS element, we can quickly get active intersections for any other RIS element, say the $m$th RIS element, as follows. 

%According to \eqref{concat_channel_with_config} and \eqref{configuration_set}, we have:
%\begin{equation}
%    \label{concat_channel_ang}
%   \angle g_{n,i} = \angle (v_n \hat{\beta}_i e^{j\hat{\alpha}_i}) = \angle v_n+ \hat{\alpha}_i.
%\end{equation}

Since $ \angle g_{n,i} =  \angle v_n+ \hat{\alpha}_i$, we have:
\begin{equation}
    \label{alt_curve_formula}
    \hat{\beta}_i\cos(\angle h^*-\angle g_{n,i}) = \hat{\beta}_i\cos(\angle h^*-\angle v_n - \hat{\alpha}_i).
\end{equation}

So $\hat{\beta}_i\cos(\angle h^*-\angle v_n - \hat{\alpha}_i)$ is the $i$th curve of the $n$th RIS element. If we shift this curve to the left by $\angle v_n - \angle v_{m}$ (i.e., we replace $\angle h^*$ with $\angle h^* + \angle v_n - \angle v_{m}$), we will get curve $\hat{\beta}_i\cos(\angle h^* + \angle v_n - \angle v_{m}-\angle v_n - \hat{\alpha}_i)$ = $\hat{\beta}_i\cos(\angle h^* - \angle v_{m} - \hat{\alpha}_i)$ = $\hat{\beta}_i\cos(\angle h^*-\angle g_{m,i})$, which is the $i$th curve of the $m$th RIS element. Therefore, if we shift all the curves of the $n$th RIS element by the constant value of $\angle v_n - \angle v_{m}$, we will end up with the curves of the $m$th RIS element. As a result, the active intersections of the $m$th RIS element would be the active intersections of the $n$th RIS element shifted by $\angle v_n - \angle v_{m}$. Thus, simply by shifting, we can calculate the active intersections of the rest of the elements.

\subsection{Determining Overall Optimal Reflection Coefficients}\label{sec:one_to_all_elements}
% In this section, we will describe a method that uses the active intersections to determine $\theta_n^*$ by checking a finite number of possibilities of $\angle h^*$.

\begin{comment}
    
If we repeat the procedure for all elements, there will be $N$ lists with a length of at most $K$. 
\end{comment}
In Section \ref{sec:interval_all_elements}, we have determined the active intersections for all RIS elements. Since each RIS element has at most $K$ active intervals and at most $K$ active intersections, there will be at most $NK$ active intersections in total for all $N$ RIS elements. The active intersections partition the range $[0,2\pi)$ of $\angle h^*$ into at most $NK$ regions. As discussed in Section \ref{sec:inf_to_finite}, when $\angle h^*$ is within one region, $\theta_1^*,\theta_2^*,...,\theta_N^*$ remain unchanged. Thus, we only need to go through at most $NK$ regions of $\angle h^*$, find $\theta_1^*,\theta_2^*,...,\theta_N^*$ and the corresponding channel capacity $C$ when $\angle h^*$ is within each region, and pick up the largest channel capacity and select the $\theta_1^*,\theta_2^*,...,\theta_N^*$ in the corresponding region as the overall optimal reflection coefficients for the $N$ RIS elements. Since we only need to go through at most $NK$ regions, our method has a complexity linear with $N$ and $K$.

\section{Configuration Set Selection}\label{sec:conf_sel}
 In Section \ref{sec:cap_max_section}, we have demonstrated how capacity maximization is performed for any channel realization of the system, assuming a given configuration set, i.e., a set of $K$ reflection coefficient choices $\{{\hat{\beta}_{1}e^{j\hat{\alpha}_{1}}}, {\hat{\beta}_{2}e^{j\hat{\alpha}_{2}}}, ..., {\hat{\beta}_{K}e^{j\hat{\alpha}_{K}}}\}$ as shown in (\ref{configuration_set}). In this section, our target is to select the optimal configuration set for the considered system such that the expected capacity is maximized.

 \begin{comment}
    
The set of choices is defined when the surface is being designed and manufactured and can not be changed once the surface is deployed in a communication network.
\end{comment}
%The question that we address in this section is on selecting the optimal configuration. Please note that this is completely different from our efforts in the previous section. Previously, we had the configuration as given and we decided the reflection coefficient at each RIS element to maximize the capacity. Here, we are not optimizing the RIS reflection coefficients. Instead, we aim to pick the configuration that the RIS optimization algorithm can select from. 

In the literature, most of the works assume that the phase shifts of the reflection coefficient choices are evenly spaced \cite{LiuCST2021,WuZhangICASSP2019,WuZhangTCOM2020,RenShenWCL2023}, i.e., $\{\hat{\alpha}_1,\hat{\alpha}_2,...,\hat{\alpha}_K\}= \{0, \frac{2\pi}{K},\frac{4\pi}{K} ..., \frac{2\pi(K-1)}{K}\}$. This setting is reasonable when the amplitude and phase shift of a reflection coefficient can be adjusted independently. However, in our system, the amplitude and phase shift are coupled. Thus, in general, a configuration set with evenly spaced phase shifts of the reflection coefficients does not guarantee optimality. %constant and not a function of the phase shift. However, as we can see in Fig.~\ref{amp_phas_relation}, different phase shifts may impose different amplitudes.
%As a result, an inappropriate configuration might lead to a low achievable capacity. Therefore, it is crucial to determine a suitable configuration for our surface elements.

\begin{comment}
    
when the surface is being designed.
\end{comment}
%Configuration selection is an offline optimization process that can be done by the RIS manufacturer. The goal is to provide a configuration that leads to the highest average capacity over all possible channel realizations. 
\begin{comment}
    
\begin{equation}
\label{probform2}
\begin{aligned}
\max_{\psi_k} \quad & \mathbb{E}(C) = \\
\textrm{s.t.} \quad &    \theta_n \in \{\psi_{n,1}, \psi_{n,2}, ..., \psi_{n,K}\}\\
& (\angle \psi_k, |\psi_k|) \in \{(\alpha_n,\beta_n(\alpha_n)):\alpha_n \in [0,2\pi) \}\\
\end{aligned}
\end{equation}
\end{comment}

\begin{comment}
\begin{figure}[!t]
\centering
\includegraphics[width=3.4in]{sysblock_rev1.png}
\caption{Block diagram demonstrating how configuration selection affects capacity maximization}
\label{sysblock}
\end{figure}
\end{comment}

Since amplitude is a function of phase shift, selecting a configuration set is equivalent to selecting $K$ phase shifts: $\hat{\alpha}_1,\hat{\alpha}_2,...,\hat{\alpha}_K$. 
As we can see in Fig.~\ref{amp_phas_relation}, the phase shifts can be any value within $[0,2\pi)$. Thus, theoretically, there are an infinite number of possible configuration sets for the system. To make the setup feasible, we consider selecting $K$ phase shifts from a large number, denoted as $M$ ($M\gg K$), of evenly spaced phase shifts over the range $[0,2\pi)$.
%\begin{equation}
%\Omega = \{0, \frac{2\pi}{M},\frac{4\pi}{M} ..., \frac{2\pi(M-1)}{M}\}.
%\end{equation}
We have the following notation definitions.
\begin{itemize}
\item Denote the set of $M$ evenly spaced phase shifts as $\Omega$.
\item Define a {\it $K$-size subset of $\Omega$} as a subset of $\Omega$ with the size of the subset being $K$. 
\item Define $\Phi$ as the set of all $K$-size subsets of $\Omega$. 
\end{itemize}
So we have $|\Phi| = {M \choose K}$. Therefore, our objective is to select a $K$-size subset of $\Omega$, denoted $\Psi$, such that the expected capacity of the system is maximized. In the sequel, $\Psi$ is also called a configuration set of the system.

%the following approximation of function $\beta_n(\alpha_n)$
%\begin{equation*}
%\label{approx_function}
%{\bar{\beta}_n(\alpha_n)} = \begin{cases}
%\beta_n(\alpha_n),&{\text{for}}\ {\alpha_n = \frac{2m\pi}{M}, } \\ &m\in\{0,1,...,M-1\}\\ \\
%{undefined,}&{\text{otherwise.}} 
%\end{cases}
%\end{equation*}
%The configuration $\Psi$ can be defined as
%\begin{equation}
%\label{config_possib}
%\begin{aligned}
%\Psi = &\{\psi_1, \psi_2, ..., \psi_K\},\\
%& (\angle \psi_k, |\psi_k|) \in \{(\alpha_n,\bar{\beta}_n(\alpha_n)):\\
%&\alpha_n = \frac{2m\pi}{M}, m\in\{0,1,...,M-1\} \}.\\
%\end{aligned}
%\end{equation}
The configuration set selection problem can be formulated as
\begin{equation}
\label{probform2_ref}
\begin{aligned}
\Psi^* = \argmax_{\Psi \in \Phi} \quad & \mathbb{E}(C_{\Psi}^*),% = \lim_{R\rightarrow\infty}\frac{1}{R}\sum_{r=1}^R C_r^*,\\
%\textrm{s.t.} \quad %&    \theta_n \in \{\psi_{n,1}, \psi_{n,2}, ..., \psi_{n,K}\}\\
%& (\angle \psi_k, |\psi_k|) \in \{(\alpha_n,\hat{\beta}_n(\alpha_n)):\\&\alpha_n = \frac{2m\pi}{M}, m\in\{0,1,...,M-1\} \}\\
\end{aligned}
\end{equation}
where $C_{\Psi}^*$ is the maximal capacity over a channel realization of the system with the configuration set $\Psi$, and the expectation $\mathbb{E}(\cdot)$ is over all channel realizations of the system. %expected value of the optimal capacity $C^*$. The right-hand side of  \eqref{probform2_ref} is derived based on the law of large numbers (LLN) where $R$ is the number of realizations.

%$\psi_k$ is the $k$th member of the selected set of $K$ configurations.

%Assuming the approximated function consists of $M$ points, there will be ${M}\choose{K}$ ways to select $\Psi$. The $\Psi$ resulting in the highest average capacity will be the optimal solution according to \eqref{probform2_ref}. 

%\subsection{Monte Carlo Simulation based (MSCB) Method for Configuration Selection}

An intuitive method to solve the configuration set selection problem in (\ref{probform2_ref}) is to use Monte Carlo simulations, referred to as {\it Monte Carlo Simulation based (MCSB) method}.
 In this method, we go through all ${M}\choose{K}$ options of $\Psi$. For each option, we simulate a large number, denoted $R$, of channel realizations of the system. For each realization, we calculate the maximum capacity using the method described in Section \ref{sec:cap_max_section}. Then for the option, we average the achievable maximal capacity associated over all $R$ channel realizations. In the end, we select the option with the largest average achievable capacity. 

 In MCSB method, a large number of channel realizations are required for each of the ${M}\choose{K}$ options of $\Psi$. Moreover, at each channel realization, the capacity maximization method in Section \ref{sec:cap_max_section} must be applied which has a complexity of $\mathcal{O}(NK)$. Thus, the total complexity of the MCSB method is $\mathcal{O}({{M}\choose{K}}RNK)$, which is time-consuming. Next, we will propose a much faster method that does not require any Monte Carlo simulations and has insights.
\begin{comment}
    
\begin{figure}[!t]

\centering
\includegraphics[width=2in]{approx12.png}\\

\includegraphics[width=2in]{approx24.png}\\
\includegraphics[width=2in]{approx36.png}
\caption{Demonstration of $\hat{\beta}(\alpha)$ for $M=12, M=24,$ and $M=36$}
\label{approx_examples}
\end{figure}
\end{comment}

\subsection{Integral Maximization Based (IMB) Configuration Set Selection}\label{sec:IMB}
Before we present our method, we introduce two functions $F_n(\angle h^*)$ and $S(\angle h^*)$ as follows.

In Section \ref{sec:cap_max_section}, for the $n$th RIS element, we have shown that we should get the maximal of curves $\hat{\beta}_{i}\cos(\angle h^*-\angle g_{n,i})$, $i\in\{1,2,...,K\}$ for any $\angle h^* \in [0, 2\pi)$ . Accordingly, for the $n$th RIS element, we can define $F_n(\angle h^*)$ as the maximal of the curves for a given $\angle h^*$ value, as
\begin{equation}
F_n(\angle h^*) = \max_{i\in\{1,2,...,K\}} \  \hat{\beta}_i\cos(\angle h^*-\angle g_{n,i}).
\end{equation}
Fig.~\ref{Fn} shows an example of function $F_n(\angle h^*)$ with $K=4$. For an RIS with $N$ elements, we have $N$ different functions: $F_1(\angle h^*),F_2(\angle h^*),...,F_N(\angle h^*)$.

Also define function $S(\angle h^*)$ as
\begin{equation}\label{S_func}
S(\angle h^*) = \max_{i\in\{1,2,...,K\}} \  \hat{\beta}_i\cos(\angle h^*-\hat{\alpha}_i).
\end{equation}
Accordingly, we have
 \begin{equation}
    \label{shifted_max_curve}
    \begin{aligned}
S(\angle h^*) &= \max_{i\in\{1,2,...,K\}} \  \hat{\beta}_i\cos(\angle h^*-\hat{\alpha}_i)\\
&= \max_{i\in\{1,2,...,K\}} \  \hat{\beta}_i\cos(\angle h^*-\hat{\alpha}_i - \angle v_n + \angle v_n)\\
&= \max_{i\in\{1,2,...,K\}} \  \hat{\beta}_i\cos(\angle h^*-\angle g_{n,i} + \angle v_n)\\
&=F_n(\angle h^* + \angle v_n).\\
\begin{comment}
F_n(\angle h^*) &= \max_i \  \hat{\beta}_i\cos(\angle h^*-\angle g_{n,i})\\
&= \max_i \  \hat{\beta}_i\cos(\angle h^*-\angle v_n-\hat{\alpha}_i)\\
&= \max_i \  \hat{\beta}_i\cos((\angle h^*-\angle v_n)-\hat{\alpha}_i)\\
&=S(\angle h^* - \angle v_n).
\end{comment}
    \end{aligned}
    \end{equation}
Equation (\ref{shifted_max_curve}) means that if we shift curve $F_n(\angle h^*)$, for any $n$, to the left by $\angle v_n$, then we can get curve $S(\angle h^*)$. From (\ref{shifted_max_curve}), we also have
\begin{equation}\label{shifted_max_curve_2}
F_n(\angle h^*) = S(\angle h^*- \angle v_n).
\end{equation}

Next, we introduce our method for configuration set selection.

\begin{comment}
As mentioned in ~\cref{theo1}, the configuration with the largest $\langle g_{n, i},h^*\rangle$ is the optimal configuration for the $n$th element. We later simplified $\langle g_{n, i},h^*\rangle$ to $\beta_i\cos(\angle h^*-\angle g_{n,i})$ in Section III-A. Therefore, if we were to select a number of 
\end{comment}
Consider $C_{\Psi}^*$ (the maximal capacity over a channel realization of the system with the configuration set $\Psi$). For presentation simplicity, we omit subscript `$\Psi$' and write $C_{\Psi}^*$ as $C^*$ in the sequel. For $C^*$, its expectation over channel realizations is expressed as
\begin{equation}
\label{probsimp}
\begin{aligned}
\mathbb{E}(C^*) = &\mathbb{E}(B\log_2(1+\frac{P|h^*|^2}{BN_0}))\\
= & B\ \mathbb{E}(\log_2(1+\frac{P|h^*|^2}{BN_0}))\\
= & B\lim_{R\rightarrow\infty}\frac{1}{R} \sum_{r=1}^R\log_2(1+\frac{P|h_r^*|^2}{BN_0})\\
\begin{comment}
= & \frac{B}{R}\ \log_2(\prod_{r=1}^R (1+\frac{P|h_r^*|^2}{BN_0}))\\
\approx & \frac{B}{R}\ \log_2(1+\frac{P}{BN_0}\sum_{r=1}^R|h_r^*|^2)\\
= & \frac{B}{R}\ \log_2(1+\frac{PR}{BN_0} \mathbb{E}(|h^*|^2))\\
\end{comment}
\end{aligned}
\end{equation}
in which $R$ (a very large number) is the number of channel realizations, and $h_r^*$ means the optimal overall channel from the transmitter to the receiver in the $r$th realization.
\begin{comment}
    
According to \eqref{probsimp}, since $log(x)$ is an increasing function, maximizing $\mathbb{E}(C)$ would be equivalent to maximizing $\mathbb{E}(|h^*|^2)$. 
\end{comment}

% Let us define $F_n(\angle h^*)$ and $S(\angle h^*)$ as 
% \begin{equation}
%    \label{max_curve_func}
%    \begin{aligned}
%&F_n(\angle h^*) = \max_i \  \hat{\beta}_i\cos(\angle h^*-\angle g_{n,i}),\\ %\quad \angle h^* \in I_{n,i}^*,\\ 
%&S(\angle h^*) = \max_i \  \hat{\beta}_i\cos(\angle h^*-\hat{\alpha}_i).
%    \end{aligned}
%    \end{equation}
%According to \eqref{concat_channel_ang}, we will have 

%Now if we shift the $\hat{\beta}_i\cos(\angle h^*-\angle g_{n,i})$ curves by $\angle v_n$, we will have the $\hat{\beta}_i\cos(\angle h^*-\alpha_{n,i})$. Let us define $S(\angle h^*)$ as the maximum of the $\hat{\beta}_i\cos(\angle h^*-\alpha_{n,i})$ curves. We will have
\begin{comment}
    
 \begin{equation}
    \label{shifted_max_curve}
    \begin{aligned}
&F_n(\angle h^*) = S(\angle h^* - \angle v_n).\\ 
    \end{aligned}
    \end{equation}
\end{comment}
%$F_n(\angle h^*)$ is depicted in Fig~\ref{Fn}.
\begin{figure}[!t]
\centering
\includegraphics[width=3.4in]{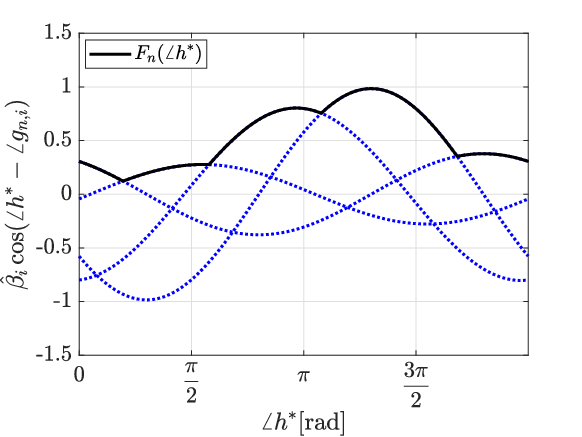}
\caption{An example demonstrating $F_n(\angle h^*)$}
\label{Fn}
\end{figure}

The term $|h_r^*|$ in (\ref{probsimp}) is expressed as:
\begin{equation}
    \label{max_curve_eq}
    \begin{aligned}
&|h_r^*| = \left|h_0+ \sum_{n=1}^N g_n^*\right| \\
&=\left|\frac{\langle h_0,h_r^*\rangle + \sum_{n=1}^N \langle g_{n}^*,h_r^*\rangle}{|h_r^*|}\right|\\
&=\left||h_0|\cos(\angle h_r^*-\angle h_0) + \sum_{n=1}^N |v_n|\beta_n^*\cos(\angle h_r^* - \angle g_n^*)\right|\\
&=\left||h_0|\cos(\angle h_r^*-\angle h_0) + \sum_{n=1}^N |v_n|F_n(\angle h_r^*)\right|.\\
%&=\left||h_0|\cos(\angle h_r^*) + c\sum_{n=1}^N F_n(\angle h_r^*)\right|\\
%&=\left||h_0|\cos(\angle h_r^*) + c\sum_{n=1}^N S(\angle h_r^* - \angle v_n)\right|\\
%&\approx\left||h_0|\cos(\angle h_r^*) + \frac{cN}{2\pi}\int_{0}^{2\pi} S(x)\,dx\right|\\
%&\approx\left|\frac{cN}{2\pi}\int_{0}^{2\pi} S(x)\,dx\right|.\\
    \end{aligned}
    \end{equation}
We assume $|v_1|, |v_2|,...,|v_N|$ are approximately the same and are equal to constant $c$.\footnote{This is a common assumption when RIS is at the far field of the transmitter and the receiver \cite{BjornsonWymeerschSPM2022}.} Also assuming a weak direct path ($|h_0| \approx 0$), $|h_r^*|$ can be further expressed from (\ref{max_curve_eq}) as
\begin{equation}
    \label{max_curve_eq_1}
    \begin{aligned}
|h_r^*| %= %\left|h_0+ \sum_{n=1}^N g_n^*\right| \\
%&=\left|\frac{\langle h_0,h_r^*\rangle + \sum_{n=1}^N \langle g_{n}^*,h_r^*\rangle}{|h_r^*|}\right|\\
%&=\left||h_0|\cos(\angle h_r^*) + \sum_{n=1}^N |v_n|\beta_n^*\cos(\angle h_r^* - \angle g_n^*)\right|\\
%=\left||h_0|\cos(\angle h_r^*) + \sum_{n=1}^N |v_n|F_n(\angle h_r^*)\right|\\
&\approx\left| c\sum_{n=1}^N F_n(\angle h_r^*)\right|\\
&\overset{\text{(i)}}=\left| c\sum_{n=1}^N S(\angle h_r^* - \angle v_n)\right|\\
&\overset{\text{(ii)}}\approx\left| \frac{cN}{2\pi}\int_{0}^{2\pi} S(x)\,dx\right|.
%&\overset{\text{(ii)}}\approx\left|\frac{cN}{2\pi}\int_{0}^{2\pi} S(x)\,dx\right|.\\
    \end{aligned}
    \end{equation}
Here step (i) is from (\ref{shifted_max_curve_2}), and step (ii) is to use the integral to replace Riemann sum.
As we can see in \eqref{max_curve_eq_1}, $|h_r^*|$ is proportional to $|\int_{0}^{2\pi} S(x)\,dx|$.
\begin{theorem}
\label{posareatheo}
    $\int_{0}^{2\pi} S(x)\,dx$ is always non negative.
\end{theorem}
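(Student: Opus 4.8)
The plan is to exploit the fact that $S(x)$ is a pointwise maximum of $K$ sinusoids, each of which integrates to zero over a full period. First I would fix any single index, say $j=1$, and observe that by the very definition of $S$ in \eqref{S_func} we have the pointwise inequality $S(x) = \max_{i\in\{1,2,\dots,K\}}\hat{\beta}_i\cos(x-\hat{\alpha}_i) \geq \hat{\beta}_1\cos(x-\hat{\alpha}_1)$ for every $x \in [0,2\pi)$, simply because the maximum of a finite collection is at least as large as any of its members.

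Next I would integrate both sides of this inequality over $[0,2\pi)$. Monotonicity of the integral preserves the inequality, giving $\int_0^{2\pi} S(x)\,dx \geq \int_0^{2\pi}\hat{\beta}_1\cos(x-\hat{\alpha}_1)\,dx$. The key (and only) calculation is then to show that the right-hand side vanishes: since $\int_0^{2\pi}\cos(x-\hat{\alpha}_1)\,dx = [\sin(x-\hat{\alpha}_1)]_0^{2\pi}=0$ because the antiderivative $\sin(x-\hat{\alpha}_1)$ is $2\pi$-periodic, the whole right-hand integral equals $\hat{\beta}_1\cdot 0 = 0$ regardless of the phase shift $\hat{\alpha}_1$ and the amplitude $\hat{\beta}_1$. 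Combining the two displays yields $\int_0^{2\pi} S(x)\,dx \geq 0$, which is exactly the claim.

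I expect essentially no obstacle here; the argument is a one-line sandwich bound resting on the elementary fact that a sinusoid integrates to zero over any interval of length $2\pi$. The one point that deserves a sentence of care is that the integral is well defined, which is immediate because $S$ is the maximum of finitely many continuous functions and is therefore continuous (indeed piecewise sinusoidal, exactly as the active-interval structure of Section~\ref{sec:interval_one_element} makes explicit). It is worth remarking that the same lower bound is obtained with any index substituted for $j=1$, so the bound of zero holds uniformly, and that neither the specific coupling model \eqref{couple_formula} nor the non-negativity of the amplitudes $\hat{\beta}_i$ is actually needed for the inequality to go through.
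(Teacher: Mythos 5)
Your argument is correct and is exactly the paper's own proof: lower-bound $S(x)$ by the single sinusoid $\hat{\beta}_1\cos(x-\hat{\alpha}_1)$, integrate, and use the fact that a sinusoid integrates to zero over a full period. No differences worth noting.
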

\begin{proof}
From \eqref{S_func} we have
\begin{equation*}
    \label{positivearea}
    \begin{aligned}
    &S(\angle h^*) = \max_{i\in\{1,2,...,K\}} \  \hat{\beta}_i\cos(\angle h^*-\hat{\alpha}_i) \ge \hat{\beta}_1\cos(\angle h^*-\hat{\alpha}_1)\\
    &\longrightarrow\int_{0}^{2\pi} S(x)\,dx \ge \int_{0}^{2\pi} \hat{\beta}_1\cos(x-\hat{\alpha}_1)\,dx =0\\
    &\longrightarrow\int_{0}^{2\pi} S(x)\,dx \ge 0
    \end{aligned}
\end{equation*}
\begin{comment}
    
    Proof by contradiction can be used. Assume  $\int_{0}^{2\pi} S(x)\,dx < 0$. Thus, $\exists \ x_1 \in [0,2\pi) : S(x_1) < 0$. Assume $x_2 = (x_1 + \pi) \mod 2\pi$. We will have
\begin{equation}
    \label{positivearea}
    \begin{aligned}
    &S(x_2) \ge -S(x_1) \longrightarrow S(x_2) + S(x_1) \ge 0\\
    &\longrightarrow\int_{0}^{2\pi} S(x)\,dx \ge 0
    \end{aligned}
\end{equation}
which contradicts our initial assumption.
\end{comment}

\end{proof}

According to \eqref{probsimp}, \eqref{max_curve_eq_1} we have
\begin{equation}
    \label{finalprobform}
    \begin{aligned}
& \mathbb{E}(C^*) = B\lim_{R\rightarrow\infty}\frac{1}{R} \sum_{r=1}^R\log_2(1+\frac{P|h_r^*|^2}{BN_0})\\
=& B\lim_{R\rightarrow\infty}\frac{1}{R} \sum_{r=1}^R\log_2(1+\frac{P|\frac{cN}{2\pi}\int_{0}^{2\pi} S(x)\,dx|^2}{BN_0})\\
=& B\lim_{R\rightarrow\infty}\frac{1}{R} \sum_{r=1}^R\log_2(1+\frac{Pc^2N^2|\int_{0}^{2\pi} S(x)\,dx|^2}{4\pi^2BN_0})\\
=& B\lim_{R\rightarrow\infty}\frac{1}{R} R\log_2(1+\frac{Pc^2N^2|\int_{0}^{2\pi} S(x)\,dx|^2}{4\pi^2BN_0})\\
=& B\log_2(1+\frac{Pc^2N^2|\int_{0}^{2\pi} S(x)\,dx|^2}{4\pi^2BN_0})\\
=& B\log_2(1+\frac{Pc^2N^2(\int_{0}^{2\pi} S(x)\,dx)^2}{4\pi^2BN_0}),
    \end{aligned}
\end{equation}
in which the last equality comes from the fact that $\int_{0}^{2\pi} S(x)\,dx$ is always non-negative (Theorem \ref{posareatheo}).
According to \eqref{finalprobform}, since $\log(x)$ is an increasing function, maximizing $\mathbb{E}(C^*)$ would be equivalent to maximizing $\int_{0}^{2\pi} S(x)\,dx$, which is a major insight of our method.
%Moreover, according to ~\cref{posareatheo}, since $\int_{0}^{2\pi} S(x)\,dx$ is always non negative, 
%$|\int_{0}^{2\pi} S(x)\,dx|^2$ will become maximum when  $\int_{0}^{2\pi} S(x)\,dx$ is maximized . 

%Thus, maximizing $\mathbb{E}(C^*)$ will be equivalent to maximizing $\int_{0}^{2\pi} S(x)\,dx$. 
So in our method,  when we go through all ${M}\choose{K}$ options of $\Psi$, we no longer need to consider Monte Carlo simulations with a large number of channel realizations. For each option, we only need to compute $\int_{0}^{2\pi} S(x)\,dx$. The $\Psi$ option corresponding to the largest $\int_{0}^{2\pi} S(x)\,dx$ will be considered as the optimal configuration set. Since our method finds the maximal integral of $S(x)$, we call our method {\it Integral Maximization Based (IMB) method}.

\begin{figure}[!t]
\centering
\includegraphics[width=3.4in]{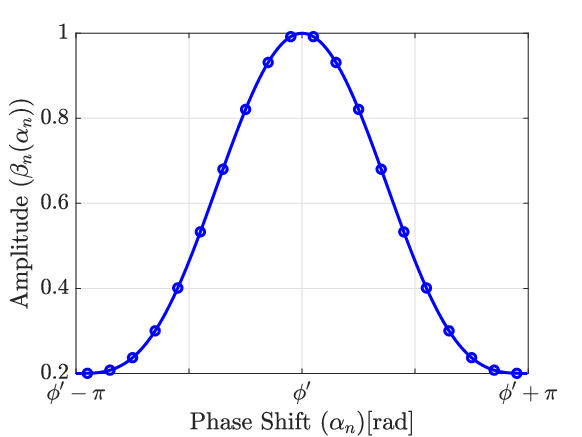}
\caption{Curve $\beta_n(\alpha_n)$ vs. $\alpha_n$ over phase shift range $[\phi' - \pi, \phi' + \pi)$, with an example to choose $M=20$ points on the curve.}
\label{amp_phas_confsel}
\end{figure}
\subsection{Search Space Compression (SSC)}
In Section \ref{sec:IMB}, given $\Omega$ (a set of $M$ evenly distributed phase shifts over range $[0,2\pi$)), the proposed IMB method selects the best option of $\Psi$ among all ${M}\choose{K}$ options. Next, we show how we should pick up the $M$ evenly distributed phase shifts over range $[0,2\pi)$.

Picking up $M$ evenly distributed phase shifts over range $[0,2\pi)$ is actually picking up $M$ points on the curve in Fig.~\ref{amp_phas_relation} over phase shift range $[0,2\pi)$. Note that in Fig.~\ref{amp_phas_relation}, the curve $\beta_n(\alpha_n)$ is symmetric over line $\alpha_n = \phi'$ (in which $\phi'=\phi + \frac{\pi}{2}$ with $\phi$ being a constant) if we view the range of $\alpha_n$ as $(-\infty, \infty)$. Thus, we select to pick up $M$ points on the curve over phase shift range $[\phi' - \pi, \phi' +\pi)$, as shown in Fig.~\ref{amp_phas_confsel}.\footnote{Picking up $M$ points over phase shift range $[0,2\pi)$ is equivalent to picking up $M$ points over phase shift range $[\phi' - \pi, \phi' +\pi)$.} The curve in Fig.~\ref{amp_phas_confsel} is perfectly symmetric over line $\alpha_n = \phi'$.

To pick up $M$ points on the symmetric curve in Fig.~\ref{amp_phas_confsel}, intuitively the $M$ points should be symmetric over line  $\alpha_n = \phi'$. Accordingly, $\Omega$ should be expressed as
\begin{equation}
    \label{eq:omega_exp}
    \begin{aligned}
\Omega = \{ \phi^\prime - \pi + \frac{\pi}{M},\phi^\prime - \pi + \frac{3\pi}{M},...,\phi^\prime - \pi + \frac{(2M-1)\pi}{M}\}.
    \end{aligned}
\end{equation}
Fig.~\ref{amp_phas_confsel} also shows an example of how $M=20$ points can be chosen on the curve.

Recall that for a given $\Omega$, our proposed IMB method should go through all ${M}\choose{K}$ options of $\Psi$. Next, we show that for $\Omega$ given in (\ref{eq:omega_exp}), some options of $\Psi$ yield the same $\int_{0}^{2\pi} S(x)\,dx$, and thus, we actually do not have to go through all ${M}\choose{K}$ options.

%According to \eqref{amp_phas_relation}, $\beta_n(\alpha_n)$ is a symmetric function. Because of this symmetry, we expect that some options of $\Psi$ might yield the same $\int_{0}^{2\pi} S(x)\,dx$. Thus, it would be unnecessary to go through all of them. In this section, we will discuss how these cases can be identified.

Consider an option $\Psi$ from $\Phi$ (recalling that $\Phi$ is the set of all $K$-size subsets of $\Omega$). In $\Psi$, we have $K$ phase shifts, which are corresponding to $K$ points (among the $M$ points) on the curve in Fig.~\ref{amp_phas_confsel}. For presentation simplicity, we denote $\Psi$ as $\Psi=\{\psi_{1}, \psi_{2}, ..., \psi_{K}\}$, in which $\psi_{1}, \psi_{2}, ..., \psi_{K}$ are the $K$ reflection coefficient choices of option $\Psi$ (which are also $K$ points among the $M$ points on the curve in Fig.~\ref{amp_phas_confsel}). Now consider another option from $\Phi$, denoted $\Psi^\dag=\{\psi^\dag_{1}, \psi^\dag_{2}, ..., \psi^\dag_{K}\}$, in which point $\psi^\dag_{k}$ and point $\psi_k$ ($k=1,2,...,K$) are symmetric over the symmetric line $\alpha_n=\phi'$ in Fig.~\ref{amp_phas_confsel}. In other words, $\psi^\dag_{k}$ and $\psi_k$ have the same amplitude but their phases are mirrored over the symmetric line, i.e., $\frac{\angle \psi^\dag_k + \angle \psi_k}{2} = \phi + \frac{\pi}{2}$. We say option $\Psi^\dag$ and option $\Psi$ are {\it mirrored option} to each other. For the two options, the $S(x)$ function is denoted as $S_\Psi(x)$ and $S_{\Psi^\dag}(x)$, respectively. We have
\begin{equation}
    \label{spacereduc}
    \begin{aligned}
    S_{\Psi^\dag}(x) &= \max_{k=1,2,...,K} |\psi^\dag_k| \cos(x - \angle \psi^\dag_k)\\
&=\max_{k=1,2,...,K} |\psi_k| \cos(x -(2\phi + \pi  - \psi_k) )\\ 
%&=\max_k |\psi_k| \cos((2\phi + \pi  - \psi_k) -\angle h^* )\\ 
&=\max_{k=1,2,...,K} |\psi_k| \cos((2\phi + \pi-x)  - \psi_k  )\\ 
&= S_\Psi(2\phi + \pi-x).
    \end{aligned}
\end{equation}
Then we have
\begin{equation}
    \label{spacereduc_integ}
    \begin{aligned}
    \int_{0}^{2\pi}  S_{\Psi^\dag}(x)\,dx &= \int_{0}^{2\pi} S_\Psi(2\phi + \pi-x)\,dx\\
    &\overset{\text{(iii)}}=\int_{-2\pi}^{0} S_\Psi(2\phi + \pi+x)\,dx\\
    &\overset{\text{(iv)}}=\int_{2\phi-\pi}^{2\phi+\pi} S_\Psi(x)\,dx\\
    &\overset{\text{(v)}}=\int_{0}^{2\pi} S_\Psi(x)\,dx.\\
    \end{aligned}
\end{equation}
Here in step (iii) we replace $-x$ by $x$, in step (iv) we replace $2\phi + \pi+x$ with $x$, and in step (v) we use the fact that $S_\Psi(x)$ is a periodical function with period $2\pi$.

Equation (\ref{spacereduc_integ}) shows that for the two options $\Psi$ and $\Psi^\dag$, the integral of $S_{\Psi}(x)$ and $S_{\Psi^\dag}(x)$ are the same. Thus, we only need to check one of the two options. We call this as {\it Search Space Compression (SSC)}.
\begin{itemize}
    \item If $K$ is an even number, then among all ${M}\choose{K}$ options of $\Psi$, some options are identical to their mirrored options, and the number of such options is ${\lfloor\frac{M}{2}\rfloor}\choose{\frac{K}{2}}$. Thus, the total number of options of $\Psi$ that need to be checked is ${{\lfloor\frac{M}{2}\rfloor}\choose{\frac{K}{2}}} + \frac{{{M}\choose{K}} - {{\lfloor\frac{M}{2}\rfloor}\choose{\frac{K}{2}}}}{2}$, which is approximately ${{{M}\choose{K}}}/{2}$ since $M$ is large. 

    \item If $K$ is an odd number, there will be two cases. When $M$ is even, each option is different from its mirrored option, and thus, the total number of options of $\Psi$ that need to be checked is ${{{M}\choose{K}}}/{2}$. When $M$ is odd, the number of options of $\Psi$ that are identical to their mirrors is  ${{\frac{M-1}{2}}\choose{\frac{K-1}{2}}}$. Thus, the total number of options of $\Psi$ that need to be checked will be ${{\frac{M-1}{2}}\choose{\frac{K-1}{2}}} + \frac{{{M}\choose{K}} - {{\frac{M-1}{2}}\choose{\frac{K-1}{2}}}}{2}$, which is approximately ${{{M}\choose{K}}}/{2}$ since $M$ is large.
    
\end{itemize}

Therefore, by the SSC method, the number of options that should be checked is cut approximately by half.

\section{Numerical Results}\label{sec:sim}
In this section, we will evaluate the performance of our proposed methods for capacity maximization and configuration set selection.
In the following simulations, the parameters are set according to Table~\ref{simul_table} unless specified otherwise.
\subsection{Capacity Maximization}

We simulate our capacity maximization method in Section \ref{sec:cap_max_section}  as well as three benchmark methods as follows. 
\begin{itemize}
\item Exhaustive search method: we go through all $K^N$ possibilities of \{$\theta_1,\theta_2,...,\theta_N$\} to find the optimal phases. 

\item Closest point projection (CPP): CPP is a heuristic algorithm used in \cite{BjornsonWymeerschSPM2022}. The idea of this algorithm is to align all RIS channels toward the direct channel as much as possible. In other words, $g_n^*$ would be the one that maximizes $\cos(\angle h_0 - \angle g_{n, i})$. 

\item Improved CPP: In the CPP method in \cite{BjornsonWymeerschSPM2022}, phase shift and amplitude of a reflection coefficient can be independently adjusted. Since we consider $\beta_n$ and $\alpha_n$ to be coupled, we make some changes to the original CPP method by using our result in Theorem 1 as follows. Instead of aligning all RIS channels toward the direct channel, we maximize the inner product of each RIS channel with the direct channel. This means we are maximizing the projection of all RIS channels on the direct channel. Thus, $g_n^*$ will now be the one that maximizes $\hat{\beta}_i\cos(\angle h_0 - \angle g_{n, i})$. This method is called {\it improved CPP}.

\end{itemize}

\begin{table}[!t]
\caption{Parameter values for simulation results \label{simul_table}}
\centering
\begin{tabular}{|c|c|c|c|}
\hline
Parameter & Value & Parameter & Value\\
\hline
$B$&$1$ \ MHz &$\frac{P}{BN_0}$&$100$ \ dB\\
\hline
$\beta_{\min}$&$0.2$ & $\phi$&$0.43\pi$\\
\hline
$\kappa$&$1.6$ & $M$&$20$\\
\hline
$|v_n|$&$-140$ \ dB & $\angle v_n$&$\sim$ Uniform$[0,2\pi)$\\
\hline
$|h_0|$&$-140$ \ dB & $\angle h_0$&$0$\\
\hline

\end{tabular}
\end{table}

%comment only for conf
\begin{comment}
\begin{table}[!t]
\caption{Parameter values for simulation results \label{simul_table}}
\centering
\begin{tabular}{|c|c|}
\hline
Parameter & Value\\
\hline
$R$&$1000$\\
\hline
$B$&$1 \ MHz$\\
\hline
$\frac{P}{BN_0}$&$100 \ dB$\\
\hline
$\beta_{min}$&$0.2$\\
\hline
$\phi$&$0.43\pi$\\
\hline
$\kappa$&$1.6$\\
\hline
$\alpha_n$&$\{0, \frac{2\pi}{K}, ..., \frac{2\pi(K-1)}{K} \}$\\
\hline
$|v_n|$&$-140 \ dB$\\
\hline
$\angle v_n$&$U(0,2\pi)$\\
\hline
$|h_0|$&$-140 \ dB$\\
\hline
$\angle h_0$&$0$\\
\hline
$M$&$36$\\
\hline

\end{tabular}
\end{table}
\end{comment}
In our simulations, the reflection coefficient of each RIS element is chosen from $K$ choices as shown in (\ref{configuration_set}), while the $K$ choices have evenly distributed phase shifts, i.e., $\hat{\alpha}_k = \frac{(k-1)\times 2\pi}{K}, k=1,2,...,K$.

Fig.~\ref{capvsnk2} shows how capacity changes with the number of elements for different algorithms with $K=2$. According to \eqref{Total_channel} and \eqref{capacity_eq}, we expect the capacity to be an increasing function of $N$, which is verified by the four curves in Fig.~\ref{capvsnk2}. 
As seen in Fig.~\ref{capvsnk2}, for all values of $N$, our proposed method yields the same capacity as the exhaustive search method, which means that our method can achieve optimality with linear complexity. The original CPP and the improved CPP have the same performance. This happens because $K$ is set to two, hence $\hat{\alpha}_{1}$ and $\hat{\alpha}_{2}$ are $\pi$ radians apart. Thus, $\cos(\angle h_0 - \angle g_{n, 1})$ and $\cos(\angle h_0 - \angle g_{n, 2})$ will have opposite signs. As a result, the amplitude no longer matters.

\begin{figure}[!t]
\centering
\includegraphics[width=3.4in]{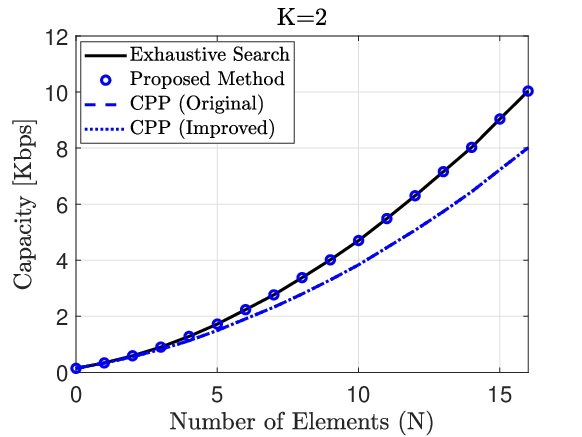}
\caption{Capacity versus the number of elements (with $K=2$ choices of reflection coefficients).}
\label{capvsnk2}
\end{figure}

In Fig.~\ref{capvsnk4}, $K$ is set to 4. Our method and the exhaustive search method still have the optimal performance. By using our result in Theorem 1, improved CPP outperforms the original CPP but still yields a suboptimal solution.

\begin{figure}[!t]
\centering
\includegraphics[width=3.4in]{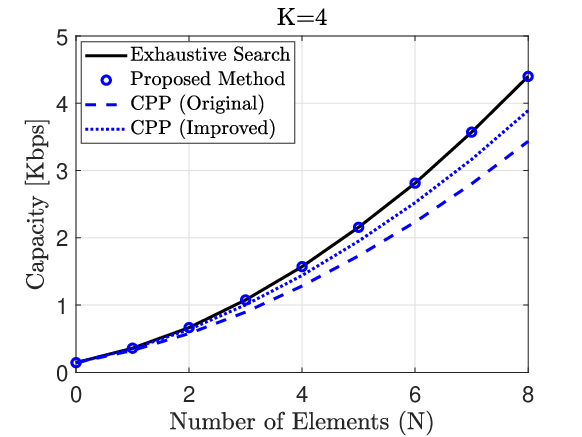}
\caption{Capacity versus the number of elements (with $K=4$ choices of reflection coefficients).}
\label{capvsnk4}
\end{figure}

In Fig.~\ref{capvsh0}, we examine how the strength of the direct channel affects the performance of the mentioned methods. According to \eqref{Total_channel}, \eqref{capacity_eq}, the capacity is expected to be an increasing function of $|h_0|$. At $|h_0|=-140~\text{dB}$, the proposed method has a noticeable advantage over the improved CPP. But as $|h_0|$ increases, the gap between the two diminishes. The reason is that when the direct channel becomes stronger, $h_0$ will become the dominant term in \eqref{Total_channel}, and thus, its phase and amplitude greatly affect $h^*$. As a result, $\angle h_0$ will become an appropriate approximation for $\angle h^*$. In other words, the improved CPP would be a proper estimate for our proposed method when the direct channel is strong.

\begin{figure}[!t]
\centering
\includegraphics[width=3.4in]{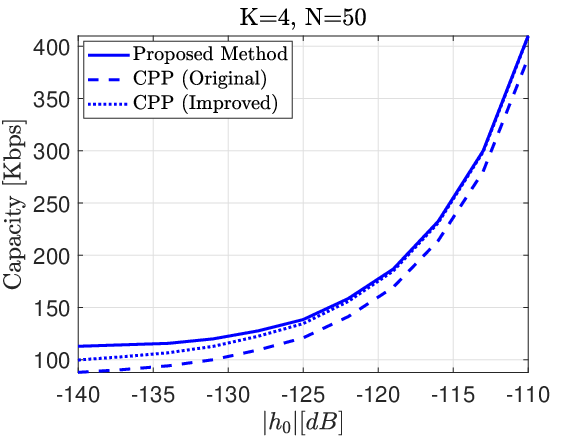}
\caption{Capacity versus $|h_0|$.}
\label{capvsh0}
\end{figure}

\subsection{Configuration Set Selection}
Next, we use simulations to evaluate the performance of our configuration set selection method IMB as well as the IMB method enhanced with the SSC method (denoted as ``IMB+SSC"). As a comparison, we also simulate two other methods: 1) the MCSB method with $R=1,000$ channel realizations for each simulation setup, and 2) the evenly distributed configuration set selection method in which the phase shifts of the reflection coefficients in the configuration set are evenly distributed (i.e., $\{\hat{\alpha}_1, \hat{\alpha}_2,...,\hat{\alpha}_K\}=\{0, \frac{2\pi}{K},\frac{4\pi}{K} ..., \frac{2\pi(K-1)}{K}\}$). 

Fig.~\ref{capvsnk2conf} demonstrates the performance of different methods used for configuration set selection. As the MCSB method uses Monte Carlo Simulations, it can be viewed as the optimal method.  As we can see in Fig.~\ref{capvsnk2conf}, MCSB achieves the maximal capacity, while our IMB and IMB+SSC have the same performance with almost negligible difference from the performance of MCSB, which means that our IMB and IMB+SSC achieve an almost-optimal performance, and the SSC method reduces search space without any performance degradation. The evenly distributed configuration set selection method has less capacity than MCSB, IMB, and IMB+SSC.

Since IMB and IMB+SSC have the same capacity performance, we do not show the results of IMB+SSC in Figs.~\ref{capvsk}-\ref{capvskappa}.

Fig.~\ref{capvsk} depicts how capacity changes with $K$ in our IMB method and the evenly distributed configuration set selection method. The performance of MCSB method is not shown in this figure, due to the prohibitive simulation time needed for the MCSB method. As we expected, in both IMB and the evenly distributed configuration set selection methods, increasing $K$ would provide us with a capacity gain. The gain is large for small values of $K$ (e.g. from $K=2$ to $K=4$). This suggests that increasing $K$ to a large number would be unnecessary and considering small values for $K$ (e.g., $K=8$) would be sufficient. %Regarding the performance gap between the two curves, we can see IMB outperforms the evenly distributed method with a larger gap for smaller values of $K$. 
\begin{comment}
The capacity ratio between IMB and CMB is depicted as well. As $N$ grows, we can see that the ratio increases. Let us discuss the reason. In \eqref{max_curve_eq}, we replaced the summation with the integral using the Riemann sum approximation. The accuracy of this approximation increases with $N$. So we expect IMB to be more accurate as $N$ gets larger which is why the ratio increases with $N$.
\end{comment}
\begin{figure}[!t]
\centering
\includegraphics[width=3.4in]{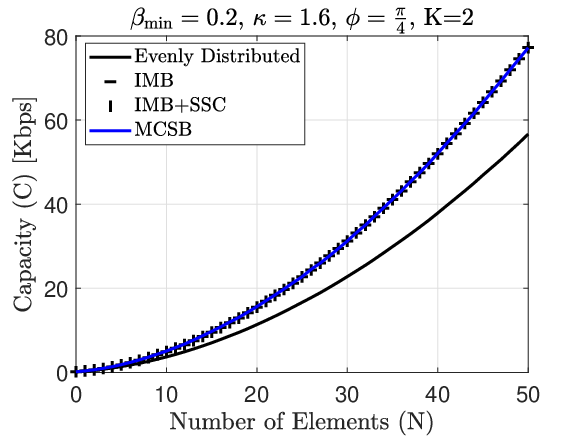}
\caption{Capacity versus the number of elements for different configuration set selection methods.}
\label{capvsnk2conf}
\end{figure}

\begin{figure}[!t]
\centering
\includegraphics[width=3.4in]{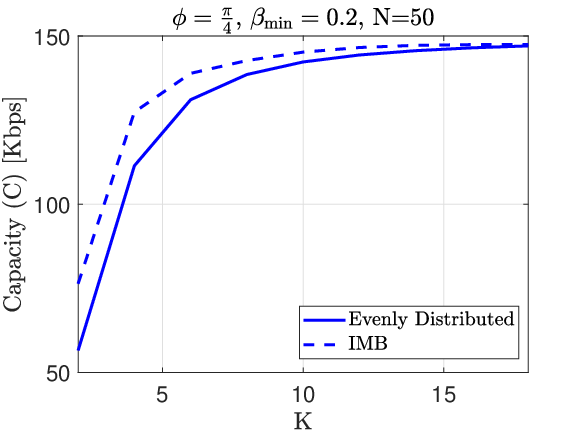}
\caption{Capacity versus the number of choices of reflection coefficients $(K)$}
\label{capvsk}
\end{figure}
Next, we will discuss how the parameters in the reflection coefficient model in \eqref{couple_formula} influences the performance of different configuration set selection methods. 

\begin{comment}
    
As we can see in Fig.~\ref{capvsphi}, the performance of the evenly distributed method is highly dependent on the value of $\phi$. In contrast, the performance of our proposed methods remains consistent regardless of the changes in $\phi$. Changing the value of $\phi$ is equivalent to shifting the $\beta_n(\alpha_n)$ function. Thus, our proposed methods go through the same choices but in a different order. As a result, their performances remain unchanged.
\end{comment}

\begin{comment}
    
\begin{figure}[!t]
\centering
\includegraphics[width=3.4in]{capvsphi.png}
\caption{Capacity versus $\phi$}
\label{capvsphi}
\end{figure}
\end{comment}

Fig.~\ref{capvsbetamin} shows how $\beta_{\min}$ affects the capacity of the configuration set selection methods. According to \eqref{couple_formula}, $\beta_{\min}$ represents the amount of loss in an RIS element. High $\beta_{\min}$ indicates that the element has low loss whereas low $\beta_{\min}$ implies that the element is quite lossy. Thus, we expect the capacity to be an increasing function of $\beta_{\min}$ for all methods. When $\beta_{\min} = 1$, equation \eqref{couple_formula} reduces to $\beta_n(\alpha_n) = 1$, which means that the amplitude $\beta_n$ and phase shift $\alpha_n$ are not coupled anymore, and thus, any set of $K$ reflection coefficients whose phase shifts are evenly spaced would be the optimal solution. So all methods yield the same capacity at $\beta_{\min} = 1$. It can also be observed that our proposed method (IMB) is most effective when the RIS elements are highly lossy, i.e., when $\beta_{\min}$ is small. 
\begin{figure}[!t]
\centering
\includegraphics[width=3.4in]{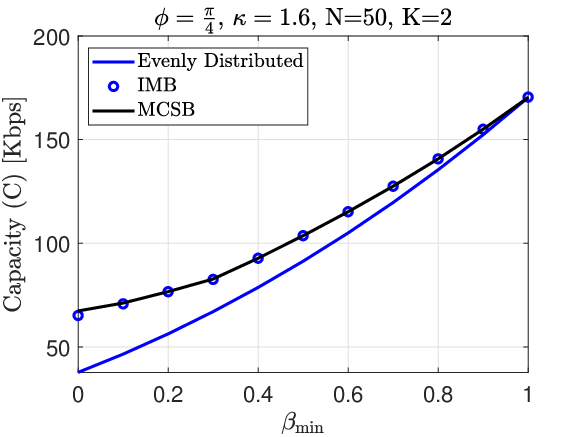}
\caption{Capacity versus $\beta_{\min}$.}
\label{capvsbetamin}
\end{figure}

We can see the effect of $\kappa$ of \eqref{couple_formula} on the performances of the configuration set selection methods in Fig.~\ref{capvskappa}. Similar to $\beta_{\min}$, $\kappa$ is also an indicator of the degree of loss in an RIS element. In contrast to $\beta_{\min}$, the value $\kappa$ is proportional to the amount of loss. As a result, we expect the achievable capacity to be a decreasing function of $\kappa$. Similar to Fig.~\ref{capvsbetamin}, in the lossless scenario ($\kappa=0$), we have $\beta_n(\alpha_n) = 1$, and thus, all methods achieve the same capacity. %As $\kappa$ increases, we can see the performance gap between our proposed method and the evenly distributed method gets larger. %This suggests that our proposed method is the best fit for cases where the degree of loss is high.
\begin{figure}[!t]
\centering
\includegraphics[width=3.4in]{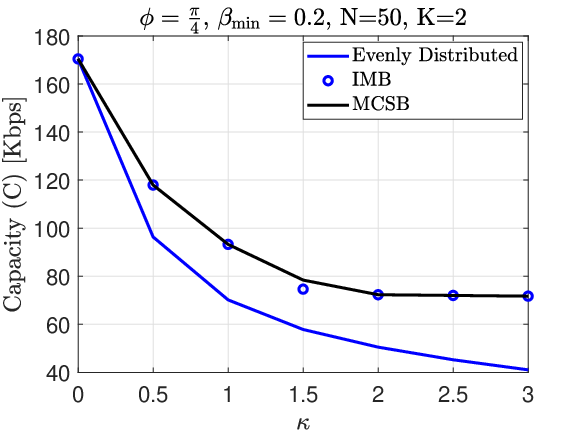}
\caption{Capacity versus $\kappa$.}
\label{capvskappa}
\end{figure}

Next, we demonstrate the benefit of IMB+SSC compared to IMB. Fig.~\ref{proctimevsm} demonstrates the processing time for IMB and IMB+SSC. The processing time is defined as the time that a method takes during determining the configuration set. As we can see, the IMB+SSC method is almost twice as fast as the original IMB. Fig.~\ref{numstatesvsm} shows the number of options of $\Psi$ that each method has to go through. As we expected, when SSC is applied to IMB, the number of searched options gets almost halved resulting in a more compact search space. 

\begin{figure}[!t]
\centering
\includegraphics[width=3.4in]{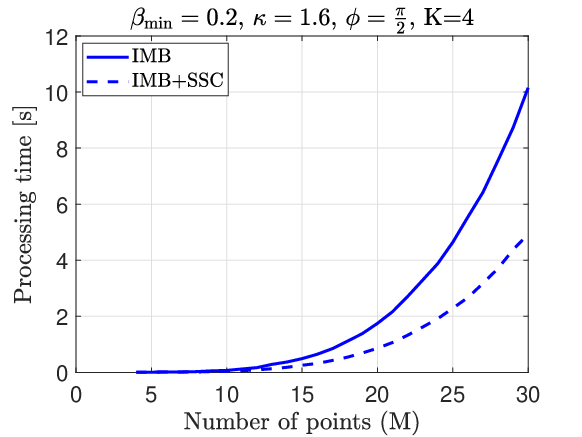}
\caption{Processing time versus $M$.}
\label{proctimevsm}
\end{figure}

\begin{figure}[!t]
\centering
\includegraphics[width=3.5in]{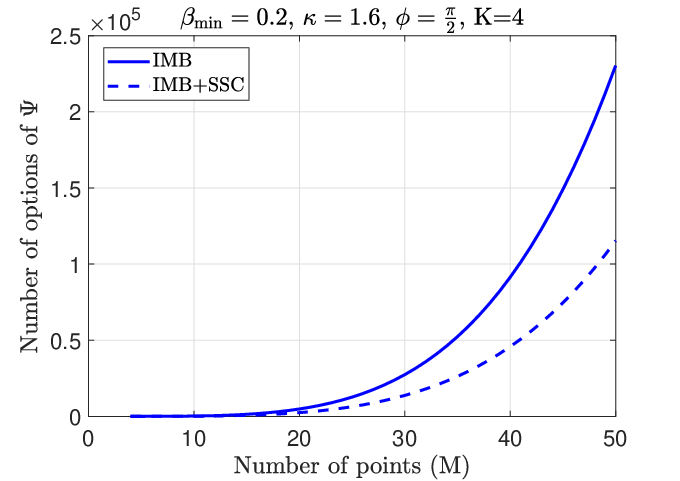}
\caption{Number of searched options of $\Psi$ versus $M$.}
\label{numstatesvsm}
\end{figure}

\section{Conclusion}\label{sec:conclusion}
%In this paper, we work on the discrete reflection optimization of RIS under a practical model for the amplitude and phase shift of reflection coefficients. To maximize the channel capacity, we first find the optimal reflection coefficients of the RIS elements by assuming $\angle h^*$ is known. Then we show that to get the overall optimal solution, we do not need to go through the infinite possibilities of $\angle h^*$. Rather, we only need to check a finite number of $\angle h^*$ regions, and then get the overall optimal reflection coefficients of the RIS elements. Our method has a complexity liner with the number of RIS elements and the number of reflection coefficient choices. Although our work is focused on a system with a single user, our work can be straightforwardly extended to multiple-user cases, as the RIS partitioning technology \cite{MakinArzykulovTWC2024,KhaleelBasarJSTSP2022} and/or distributed RIS deployment \cite{ZhangZhangTCOM2021} can decompose a multi-user scenario into a number of single-user sub-problems.
In this paper, we work on the discrete reflection optimization of RIS elements. In contrast to most works in the literature, we consider a practical setup in which the amplitude and the phase shift of RIS elements are coupled. To maximize the capacity of a system with a given configuration set, we develop an algorithm that yields the global optimal reflection coefficients of RIS elements with linear complexity. We also develop an efficient method called ``IMB'' that finds the optimal configuration set. Our method is based on our insightful finding that maximizing the average system capacity is approximately equivalent to maximizing the integral $\int_{0}^{2\pi} S(x)\,dx$. %Thus, our method is much faster than the Monte Carlo simulation based method that has to use a large number of simulations for varying channel realizations. 
Numerical results show that our capacity maximization method and configuration selection method have apparent gains in terms of channel capacity. In this work, we investigate a single-user setup. However, as discussed in Section I, techniques such as RIS partitioning and/or distributed RIS deployment can help us straightforwardly extend our methods to multi-user setups.

\appendices

\section{Proof of Theorem 1}
    We use proof by contradiction. Assume $g_n^* \neq g_{n,i}$. Let us consider $g_n^*=g_{n,l}, (l\in\{1,2,...,K\},l\neq i)$. From the optimal reflection coefficients of all RIS elements that achieve $h^*$, if we replace the reflection coefficient of the $n$th RIS element with $g_{n,i}$, then the overall channel from the transmitter to the receiver is denoted as $h^\dag=h^* - g_{n,l} + g_{n,i}$. Since we assume that $g_{n,i}$ is not optimal, $|h^\dag|$ should be smaller than $|h^*|$. We will have:
    \begin{equation}
    \label{theo1_eq}
    \begin{aligned}
    &|h^\dag|^2 < |h^*|^2\\
    \longrightarrow \quad& |h^* - g_{n,l} + g_{n,i}|^2 < |h^*|^2 \\
    \longrightarrow \quad& |h^* - g_{n,l}|^2 + |g_{n,i}|^2 + 2\langle h^* - g_{n,l},g_{n,i}\rangle < |h^*|^2 \\  
    \overset{\text{(i)}}{\longrightarrow} \quad& |h^* - g_{n,l}|^2 + |g_{n,i}|^2 + 2\langle h^*,g_{n,i}\rangle  \\  
    &-2\langle g_{n,l},g_{n,i}\rangle < |h^*|^2\\
    &\text{(step (i) is due to additivity property of inner product)}\\
    \longrightarrow \quad& |h^*|^2 + |g_{n,l}|^2 - 2\langle h^*,g_{n,l}\rangle + |g_{n,i}|^2 + 2\langle h^*,g_{n,i}\rangle  \\ 
    &-2\langle g_{n,l},g_{n,i}\rangle < |h^*|^2\\
    \longrightarrow \quad& |g_{n,l}|^2 + |g_{n,i}|^2 -2\langle g_{n,l},g_{n,i}\rangle - 2\langle h^*,g_{n,l}\rangle    \\ 
    & + 2\langle h^*,g_{n,i}\rangle < 0\\  
    \longrightarrow \quad& |g_{n,l} - g_{n,i}|^2 - 2\langle h^*,g_{n,l}\rangle + 2\langle h^*,g_{n,i}\rangle < 0\\ 
    \longrightarrow \quad& |g_{n,l} - g_{n,i}|^2 + 2(\langle h^*,g_{n,i}\rangle - \langle h^*,g_{n,l}\rangle) < 0.\\ 
    \end{aligned}
    \end{equation}
    $|g_{n,l} - g_{n,i}|^2$ is a non-negative number. Since $\langle h^*,g_{n,i}\rangle$ is the maximum among $\{\langle h^*,g_{n,1}\rangle,\langle h^*,g_{n,2}\rangle,...,\langle h^*,g_{n,K}\rangle\}$, $2(\langle h^*,g_{n,i}\rangle - \langle h^*,g_{n,l}\rangle)$ is also a non-negative number. Thus, we have reached a contradiction in the last line of (\ref{theo1_eq}). The proof is now complete.

\section{Proof of Theorem 2}
Consider curve $\hat{\beta}_i\cos(\angle h^*-\angle g_{n,i})$. Assume the interval between two consecutive intersections $q_1$ and $q_2$ on the curve\footnote{Here $q_1$ and $q_2$ are $\angle h^*$ values of the two intersections.} is an active interval. Assume $q_2$ is the intersection in common between curve $\hat{\beta}_i\cos(\angle h^*-\angle g_{n,i})$ and curve $\hat{\beta}_l\cos(\angle h^*-\angle g_{n,l})$. The interval from $q_1$ to $q_2$ is assumed to be active for curve $\hat{\beta}_i\cos(\angle h^*-\angle g_{n,i})$. Thus, we have $\hat{\beta}_i\cos(\angle h^*-\angle g_{n,i}) > \hat{\beta}_l\cos(\angle h^*-\angle g_{n,l}), \forall \angle h^* \in (q_1, q_2)$. At the beginning of Section III-C, we have proved that the intersections in common between each pair of curves are $\pi$ radians apart. Therefore, we can say $\hat{\beta}_i\cos(\angle h^*-\angle g_{n,i}) > \hat{\beta}_l\cos(\angle h^*-\angle g_{n,l}), \forall \angle h^* \in (q_2 -\pi, q_2)$.\footnote{Recall that each interval is defined within $[0,2\pi)$ in Section \ref{sec:inf_to_finite}. Here interval $(q_2 -\pi, q_2)$ actually means $(q_2 -\pi \mod 2\pi, ~q_2)$. We use $(q_2 -\pi, q_2)$ for presentation simplicity. In general, when we write an interval as $(x_1,x_2)$, it actually means $(x_1 \mod 2\pi,~x_2 \mod 2\pi)$.} We will have:
     \begin{equation}
    \label{q_2_interv}
    \begin{aligned}
    &\hat{\beta}_i\cos(\angle h^*-\angle g_{n,i}) > \hat{\beta}_l\cos(\angle h^*-\angle g_{n,l}),\\ &\forall \angle h^* \in (q_2 -\pi, q_2)\\
    &\longrightarrow - \hat{\beta}_i\cos(\angle h^*-\angle g_{n,i}) <- \hat{\beta}_l\cos(\angle h^*-\angle g_{n,l}),\\ &\forall \angle h^* \in (q_2 -\pi, q_2)\\
    &\rightarrow \hat{\beta}_i\cos(\angle h^*+\pi-\angle g_{n,i}) < \hat{\beta}_l\cos(\angle h^*+\pi-\angle g_{n,l}),\\ &\forall \angle h^* \in (q_2 -\pi, q_2)\\
    &\rightarrow \hat{\beta}_i\cos(\angle h^*-\angle g_{n,i}) < \hat{\beta}_l\cos(\angle h^*-\angle g_{n,l}),\\ &\forall \angle h^* \in (q_2, q_2+\pi).\\
    \end{aligned}
    \end{equation}
According to \eqref{q_2_interv}, $\hat{\beta}_i\cos(\angle h^*-\angle g_{n,i})$ cannot be the maximum curve (i.e., the curve above all other curves) $\forall \angle h^* \in (q_2, q_2+\pi)$. Now, assume $q_1$ is the intersection in common between $\hat{\beta}_i\cos(\angle h^*-\angle g_{n,i})$ and $\hat{\beta}_{l^\prime}\cos(\angle h^*-\angle g_{n,l^\prime})$. We have $\hat{\beta}_i\cos(\angle h^*-\angle g_{n,i}) > \hat{\beta}_{l^\prime}\cos(\angle h^*-\angle g_{n,l^\prime}), \forall \angle h^* \in (q_1, q_2)$. 
Since the interval from $q_1$ to $q_2$ is active for curve $\hat{\beta}_i\cos(\angle h^*-\angle g_{n,i})$, we have $\hat{\beta}_i\cos(\angle h^*-\angle g_{n,i}) > \hat{\beta}_{l^\prime}\cos(\angle h^*-\angle g_{n,l^\prime}), \forall \angle h^* \in (q_1, q_1 + \pi)$. We will have:
     \begin{equation}
    \label{q_1_interv}
    \begin{aligned}
    &\hat{\beta}_i\cos(\angle h^*-\angle g_{n,i}) > \hat{\beta}_{l^\prime}\cos(\angle h^*-\angle g_{n,l^\prime}),\\ &\forall \angle h^* \in (q_1, q_1 + \pi)\\
    &\longrightarrow - \hat{\beta}_i\cos(\angle h^*-\angle g_{n,i}) <- \hat{\beta}_{l^\prime}\cos(\angle h^*-\angle g_{n,l^\prime}),\\ &\forall \angle h^* \in (q_1, q_1 + \pi)\\
    &\rightarrow \hat{\beta}_i\cos(\angle h^*-\pi-\angle g_{n,i}) < \hat{\beta}_{l^\prime}\cos(\angle h^*-\pi-\angle g_{n,l^\prime}),\\ &\forall \angle h^* \in (q_1, q_1 + \pi)\\
    &\rightarrow \hat{\beta}_i\cos(\angle h^*-\angle g_{n,i}) < \hat{\beta}_{l^\prime}\cos(\angle h^*-\angle g_{n,l^\prime}),\\ &\forall \angle h^* \in (q_1 - \pi, q_1).\\
    \end{aligned}
    \end{equation}
According to \eqref{q_1_interv}, $\hat{\beta}_i\cos(\angle h^*-\angle g_{n,i})$ cannot be the maximum curve $\forall \angle h^* \in (q_1 - \pi, q_1)$. Since $(q_1 - \pi, q_1) \cup (q_1, q_2) \cup (q_2, q_2 + \pi) $ covers the whole $[0,2\pi)$ range, there will be no active interval outside $(q_1, q_2)$ for curve $\hat{\beta}_i\cos(\angle h^*-\angle g_{n,i})$. This completes the proof.

\end{document}